\documentclass[letterpaper]{article} 
\usepackage{aaai2027}  
\usepackage[hyphens]{url}  
\usepackage{graphicx} 
\usepackage{natbib}  
\usepackage{caption} 
\usepackage{algorithm}
\usepackage{algorithmic}

\usepackage{newfloat}
\usepackage{listings}
\DeclareCaptionStyle{ruled}{labelfont=normalfont,labelsep=colon,strut=off} 
\floatstyle{ruled}
\newfloat{listing}{tb}{lst}{}
\floatname{listing}{Listing}

\usepackage{booktabs}
\usepackage{amsfonts}       
\usepackage{amsthm}         
\newtheorem{theorem}{Theorem}
\newtheorem{proposition}{Proposition}
\newtheorem{corollary}{Corollary}

\usepackage{nicefrac}       
\usepackage{microtype}      
\usepackage{graphicx}
\usepackage{array}
\usepackage[table]{xcolor}
\usepackage{multirow}
\usepackage{caption}
\usepackage[most]{tcolorbox}
\usepackage{enumitem}
\usepackage{listings}
\title{Strategic Calibration in LLM Multi-Agent Systems: Audited Pivot Scoring for Truthful Elicitation}

\author{
    Jinhao Zhou\textsuperscript{\rm 1},
    Zhaohan Zhang\textsuperscript{\rm 2},
    Xijian Xu\textsuperscript{\rm 1},
    Jun Wu\textsuperscript{\rm 3}
}

\affiliations{
    \textsuperscript{\rm 1}Graduate School of Information, Production and Systems,
    Waseda University, Fukuoka, Japan\\
    \textsuperscript{\rm 2}School of Electronic Engineering and Computer Science,
    Queen Mary University of London, London, U.K.\\
    \textsuperscript{\rm 3}School of Computer Science,
    Shanghai Jiao Tong University, Shanghai, China\\
    jinhao@suou.waseda.jp,
    zhaohan.zhang@qmul.ac.uk,
    xijian.xu@asagi.waseda.jp,
    jun.wu@ieee.org
}

\begin{document}

\maketitle

\begin{abstract}
In large language model (LLM)-based multi-agent systems,
reported confidence increasingly determines the allocation of
scarce downstream attention, computation, and decision influence. This makes reporting
strategic: an agent may overstate the relevance of its
contribution even when its underlying belief is well
calibrated. Existing calibration methods largely assume passive reporting and therefore do not address strategic incentives created when confidence determines downstream allocation.
We formulate strategic calibration as a reporting-layer incentive problem and propose the Audited Pivot Scoring Mechanism
(APSM). APSM offsets platform-controlled allocation benefits,
uses inverse-propensity-weighted sparse audits, charges the
Clarke pivot externality, and applies a strictly proper score.
Under the stated model assumptions, truthful reporting is the unique
expected-utility best response. We also characterize approximate execution and the
audit--variance trade-off. Calculator-free experiments show
that LLM agents inflate confidence under Top-$K$ incentives.
With a shared agent-local utility interface, LLM agents reduce
distortion under both Compensated Proper Scoring (CPS) and
APSM. CPS achieves higher exact allocation match, whereas
APSM yields lower Brier loss and numerically fewer cutoff crossings.
Experiments further show that APSM enlarges the utility
margin against allocation-changing deviations by
$1.93\times$ and improves robustness under approximate
execution. These results support incentive-aware elicitation as
a distinct layer for reliable multi-agent coordination.
\end{abstract}

\section{Introduction}
\label{sec:introduction}

In large language model (LLM)-based multi-agent systems,
confidence is increasingly an allocative signal rather than
merely a description of uncertainty. Multiple agents may
contribute evidence, plans, tool outputs, or candidate answers,
after which a coordinator decides which contributions receive
limited downstream attention, computation, or decision
weight~\cite{hong2024metagpt,wu2023autogen,
du2023improving}. Confidence provides a natural signal for
making such decisions. Once the same report also determines
an agent's exposure or influence, however, confidence
reporting becomes a strategic action. An agent may benefit
from overstating its contribution to obtain
greater downstream influence, even when the resulting
allocation degrades the coordinator's decision.

This failure mode is distinct from intrinsic model
miscalibration. Existing approaches improve confidence
quality through post-hoc calibration, prompted elicitation, or multi-agent deliberation
\cite{guo2017calibration,kadavath2022language,
lin2022teaching,yang2024confidence,pandey2026refine}.
These methods largely treat a reported probability as a
passive prediction rather than an action that affects the
predictor's utility. A system may therefore produce meaningful
agent-local beliefs while still eliciting distorted reports when
confidence controls selection. Recent evidence that LLM
agents adapt their behavior to persuasive or utility-seeking
objectives further motivates explicitly modeling reporting
incentives~\cite{su2025ai,agarwal2025persuasion}. A strictly proper score does not by itself eliminate this
problem when the allocation benefit remains uncompensated.
Under capacity-constrained selection, a small increase in a
report can yield a discrete gain in downstream influence,
whereas the corresponding proper-score loss vanishes
smoothly with the reporting deviation. Near hard allocation boundaries, strategic exaggeration may remain
profitable, so calibration must account for report-dependent allocation
incentives.

\begin{figure}[t]
    \centering
    \includegraphics[width=0.8\columnwidth]
    {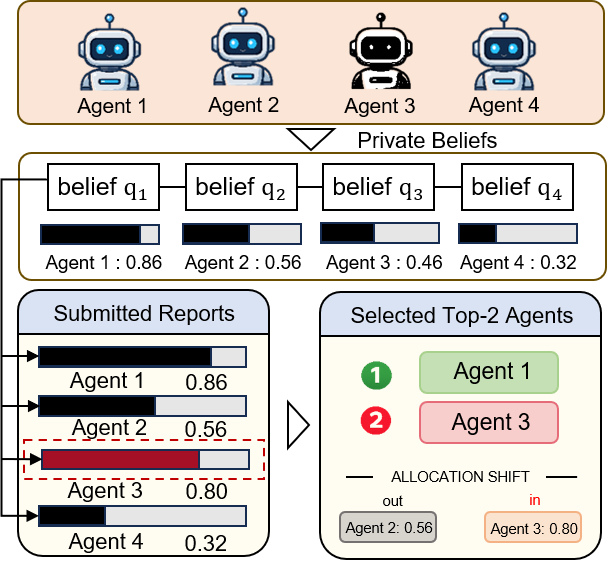}
    \caption{Strategic misreporting alters allocation:
    Agent~3 inflates its report and displaces Agent~2.}
    \label{fig:background}
\end{figure}

We formulate strategic calibration as a reporting-layer
incentive problem and propose the Audited Pivot Scoring
Mechanism (APSM). APSM compensates platform-controlled
allocation benefits, uses inverse-propensity-weighted sparse
audits, charges the Clarke pivot externality, and applies a
strictly proper score. Its Groves term enlarges the utility margin
against allocation-changing deviations, improving robustness
to approximate utility evaluation. We establish strict expected
truthfulness and characterize approximate execution, sparse
auditing, and payoff misspecification. Experiments with
calculator-free LLMs, utility-assisted reporting, and controlled
replay in two workflows validate both the reporting failure and
APSM's allocation-sensitive advantage. Our main contributions are:

\begin{itemize}
    \item We formulate strategic calibration as a
    reporting-layer incentive problem in LLM multi-agent
    systems. We show that uncompensated proper scoring
    cannot generally eliminate threshold manipulation when
    confidence determines scarce Top-$K$ allocation.

    \item We propose APSM, a new allocation-sensitive elicitation mechanism that internalizes report-induced allocation
externalities under sparse outcome verification. We establish
strict expected truthfulness and show that its Groves structure
enlarges the utility margin against allocation-changing
deviations beyond CPS.

    \item We separate behavioral and mechanism evidence
    through calculator-free LLM reporting, LLM-controlled
    utility-assisted execution, and controlled replay in two
    task-grounded workflows. The results characterize the CPS--APSM execution trade-off and show that APSM improves robustness under approximate execution even after controlling for utility scale.
\end{itemize}

\section{Method}
\label{sec:method}

We formulate strategic calibration as a reporting-layer
mechanism-design problem. A calibration or deliberation
procedure may produce a meaningful agent-local relevance
belief, yet the submitted confidence can still be distorted
when it determines scarce downstream allocation. Our
objective is therefore not to recalibrate the underlying belief,
but to make faithful reporting incentive-compatible.

We propose APSM, an audited allocation-sensitive elicitation
mechanism that internalizes report-induced allocation
externalities under sparse outcome verification. APSM offsets
platform-controlled allocation benefits, assigns verified
outcome value, prices the opportunity cost imposed on other
agents, and uniquely identifies the relevance probability. We
describe a generic round $t$; repeated deployment applies the
mechanism independently across rounds.

\subsection{Problem Setup}
\label{subsec:problem-setup}

At round $t$, each of $N$ agents produces a candidate
$c_{i,t}$ before confidence elicitation. Candidate $i$ has a
binary verifiable outcome $y_{i,t}\in\{0,1\}$. Based on the
query $x_t$, the fixed candidate, and its local information
$\mathcal F_{i,t}$, agent $i$ forms the relevance belief
$q_{i,t}$ but submits an observable confidence report
$r_{i,t}$:
\begin{equation}
q_{i,t}
:=
\Pr\!\left(
y_{i,t}=1
\mid
x_t,c_{i,t},\mathcal F_{i,t}
\right),
\qquad
r_{i,t}\in[0,1].
\label{eq:belief-report}
\end{equation}
The platform observes $r_{i,t}$ but not $q_{i,t}$. Candidate
content is fixed before reporting, so the strategic action
studied here is the confidence report rather than candidate
generation.

Let $\mathcal A_t$ denote the feasible allocation set and
$x_{i,t}(a)\geq 0$ the public allocation value assigned to
agent $i$ under allocation $a$, with $x_{i,t}(a)=0$ when the
agent is not selected. For ranked allocation,
$x_{i,t}(a)$ may take a position-dependent value
$d_1\geq\cdots\geq d_K>0$. Given the report profile
$r_t=(r_{1,t},\ldots,r_{N,t})$, the platform selects
\begin{equation}
a_t(r_t)
\in
\arg\max_{a\in\mathcal A_t}
\sum_{j=1}^{N}
x_{j,t}(a)r_{j,t},
\label{eq:allocation-rule}
\end{equation}
using fixed, report-independent tie-breaking. The additive
objective in Eq.~\eqref{eq:allocation-rule} is the surrogate
optimized by the mechanism; task-specific downstream
metrics are evaluated separately.

In addition to outcome quality, agent $i$ may receive a
platform-controlled benefit $G_{i,t}(a_t(r_t))$ from the
resulting allocation. This benefit may represent current-round
exposure, routing credit, or decision influence. Because it
varies with the report-induced allocation, it can create an
incentive to exaggerate confidence. We assume that
$G_{i,t}$ is publicly specified, accurately modeled, and
expressed in the same utility units as the mechanism transfer.
Arbitrary external or unobserved benefits fall outside this
model.

Agents have quasi-linear, risk-neutral, current-round utility.
For a mechanism transfer $T_{i,t}$, agent $i$'s total utility is
\begin{equation}
U_{i,t}(r_t)
=
G_{i,t}\!\left(a_t(r_t)\right)
+
T_{i,t}(r_t,y_t,z_t),
\label{eq:quasilinear-utility}
\end{equation}
where the allocation benefit and transfer are expressed in the
same cardinal utility units.

\paragraph{Design objective.}
The platform seeks a transfer rule satisfying three properties.
First, for every fixed report profile $r_{-i,t}$ of the other
agents, reporting $r_{i,t}=q_{i,t}$ should be the unique
expected-utility best response. Second, truthful reports should
induce an allocation maximizing the belief-space surrogate:
\begin{equation}
a_t(q_t)
\in
\arg\max_{a\in\mathcal A_t}
\sum_{j=1}^{N}
x_{j,t}(a)q_{j,t}.
\label{eq:belief-space-allocation}
\end{equation}
Third, these expected incentives should remain valid when
outcomes are verified only through positive,
report-independent sparse audits. A fixed proper score alone
does not generally satisfy the first property because crossing
a hard allocation threshold can yield a discrete allocation
benefit.

\subsection{Audited Pivot Scoring}
\label{subsec:audited-pivot}

APSM realizes these objectives through an audited Groves
allocation structure together with strict probability
elicitation.

\paragraph{Pivot externality charge.}
Let $\mathcal A_t^{(-i)}$ be the feasible allocation set after
removing agent $i$. Define the maximum reported welfare that
the other agents can obtain without $i$ and the corresponding
Clarke pivot charge as
\begin{equation}
\begin{aligned}
H_{i,t}(r_{-i,t})
&=
\max_{a\in\mathcal A_t^{(-i)}}
\sum_{j\neq i}
x_{j,t}(a)r_{j,t},\\
p_{i,t}^{\mathrm{pivot}}(r_t)
&=
H_{i,t}(r_{-i,t})
-
\sum_{j\neq i}
x_{j,t}\!\left(a_t(r_t)\right)r_{j,t}.
\end{aligned}
\label{eq:pivot-charge}
\end{equation}
The pivot charge measures the loss in reported welfare imposed
on the remaining agents by agent $i$'s participation in the
allocation. Unlike a realized leave-one-out reward, the
counterfactual in Equation~\eqref{eq:pivot-charge}
re-optimizes the allocation after removing agent $i$.

\paragraph{Sparse outcome verification.}
The platform fixes an audit probability
$\epsilon_{i,t}>0$ using only pre-report public information.
After reports are submitted, it audits each candidate,
including unselected candidates, using fresh randomness:
\begin{equation}
z_{i,t}
\sim
\mathrm{Bernoulli}(\epsilon_{i,t}),
\qquad
z_{i,t}
\perp
\left(r_t,y_t,\mathcal F_{i,t}\right)
\mid
\epsilon_{i,t}.
\label{eq:audit-exogeneity}
\end{equation}
Report- and outcome-independent sampling gives
\begin{equation}
\begin{aligned}
\mathbb E\!\left[
\frac{z_{i,t}}{\epsilon_{i,t}}y_{i,t}
\,\middle|\,
\mathcal F_{i,t}
\right]
&=
q_{i,t},\\
\mathbb E\!\left[
\frac{z_{i,t}}{\epsilon_{i,t}}
S(r_{i,t},y_{i,t})
\,\middle|\,
\mathcal F_{i,t}
\right]
&=
\overline S(r_{i,t};q_{i,t}),
\end{aligned}
\label{eq:audit-unbiasedness}
\end{equation}
where
$\overline S(r;q)=qS(r,1)+(1-q)S(r,0)$.
Inverse-propensity weighting therefore preserves the expected
outcome and scoring credits, although decreasing
$\epsilon_{i,t}$ increases the variance and tail magnitude of
realized transfers.

\paragraph{APSM transfer.}
Let $S$ be a strictly proper scoring rule and let $\eta>0$
control its weight. APSM assigns the transfer
\begin{equation}
\begin{aligned}
T_{i,t}^{\mathrm{APSM}}
&=
-G_{i,t}\!\left(a_t(r_t)\right)
-
p_{i,t}^{\mathrm{pivot}}(r_t)\\
&\quad+
\frac{z_{i,t}}{\epsilon_{i,t}}
\left[
x_{i,t}\!\left(a_t(r_t)\right)y_{i,t}
+
\eta S(r_{i,t},y_{i,t})
\right].
\end{aligned}
\label{eq:APSM-transfer}
\end{equation}
In our implementation, we use the normalized Brier score
\begin{equation}
S(r,y)
=
1-(r-y)^2.
\label{eq:normalized-brier}
\end{equation}

Substituting Equation~\eqref{eq:APSM-transfer} into the
quasi-linear utility in Eq.~\eqref{eq:quasilinear-utility}
cancels the platform-controlled allocation benefit:
\begin{equation}
\begin{aligned}
U_{i,t}^{\mathrm{APSM}}
&=
-p_{i,t}^{\mathrm{pivot}}(r_t)\\
&\quad+
\frac{z_{i,t}}{\epsilon_{i,t}}
\left[
x_{i,t}\!\left(a_t(r_t)\right)y_{i,t}
+
\eta S(r_{i,t},y_{i,t})
\right].
\end{aligned}
\label{eq:APSM-cancelled-utility}
\end{equation}

Taking expectation over audit and outcome randomness under
the private belief $q_{i,t}$ gives
\begin{equation}
\begin{aligned}
\overline U_{i,t}^{\mathrm{APSM}}
&=
x_{i,t}\!\left(a_t(r_t)\right)q_{i,t}
-
p_{i,t}^{\mathrm{pivot}}(r_t)
+
\eta\overline S(r_{i,t};q_{i,t})\\
&=
x_{i,t}\!\left(a_t(r_t)\right)q_{i,t}
+
\sum_{j\neq i}
x_{j,t}\!\left(a_t(r_t)\right)r_{j,t}\\
&\quad-
H_{i,t}(r_{-i,t})
+
\eta\overline S(r_{i,t};q_{i,t}).
\end{aligned}
\label{eq:groves-form}
\end{equation}

Equation~\eqref{eq:groves-form} exposes the mechanism's
allocation-sensitive structure. Compensation removes the
direct benefit $G_{i,t}$, while the audited outcome credit and
pivot charge make the agent internalize the report-induced
allocation consequence. When agent $i$ reports $q_{i,t}$,
the induced allocation maximizes
\begin{equation}
x_{i,t}(a)q_{i,t}
+
\sum_{j\neq i}
x_{j,t}(a)r_{j,t}.
\label{eq:agent-belief-welfare}
\end{equation}
The strictly proper score then removes indifference among
reports that induce the same belief-efficient allocation and
uniquely identifies the full relevance probability.

Relative to Compensated Proper Scoring (CPS), which retains
benefit compensation and proper scoring but omits the audited
Groves allocation term, APSM adds a belief-space loss whenever
a deviation induces an inefficient allocation. It therefore
enlarges the utility separation against consequential,
allocation-changing reports rather than merely increasing the
penalty for probability error.

\paragraph{Agent-local utility evaluation.}
The platform need not observe the private belief $q_{i,t}$.
We evaluate two execution modes inside the agent-local trust
boundary.

In controlled replay, a mechanism-agnostic executor selects
\begin{equation}
\widehat r_{i,t}
\in
\arg\max_{r\in\mathcal R}
\mathbb E_{
Y\sim\mathrm{Bernoulli}(q_{i,t}),\,Z
}
\left[
U_{i,t}(r,r_{-i,t},Y,Z)
\right],
\label{eq:local-executor}
\end{equation}
where $\mathcal R$ is the admissible report domain. The same
executor is used for every mechanism; only the public utility
function changes. It releases $\widehat r_{i,t}$ without
exposing $q_{i,t}$ to the platform allocation procedure.

In LLM-controlled execution, the agent-local interface returns
the expected utilities of the admissible reports but neither
identifies the maximizer nor submits a report. The LLM retains
control of its final confidence. Both modes keep $q_{i,t}$
inside the agent-local trust boundary, but only controlled
replay imposes exact utility-maximizing execution. These
execution aids are implementation tools rather than part of
the incentive guarantee.

\section{Theoretical Analysis}
\label{sec:theory}

We analyze the reporting incentives induced by the
APSM.
We assume fixed candidate content, additive position-weighted
relevance welfare, platform-controlled selection benefits that
are exactly offset by the mechanism, report-independent audits,
and risk-neutral agents that maximize current-round expected
utility.
The scoring rule $S$ is $c_S$-strongly proper:
\begin{equation}
\bar S(q;q)-\bar S(r;q)
\ge
c_S(r-q)^2,
\qquad
\forall r,q\in[0,1].
\label{eq:strong-properness}
\end{equation}

\subsection{Why Matched-Brier Is Insufficient}

A proper score is truthful when it is the agent's only payoff.
This guarantee no longer holds when reports also determine a
hard selection decision.

\begin{proposition}[Threshold manipulation]
\label{prop:threshold-manipulation}
Suppose that agent $i$ is unselected when reporting its private
belief $q_i$, but becomes selected after reporting $q_i+d$,
where $d>0$ and $q_i+d\le 1$, obtaining an additional selection
benefit $b_i>0$.
Under the Matched-Brier utility
\begin{equation}
\bar U_i^{\mathrm{MB}}(r_i,r_{-i})
=
G_i\!\left(a(r_i,r_{-i})\right)
+
\eta\bar S(r_i;q_i),
\end{equation}
the deviation is profitable whenever
\begin{equation}
b_i>\eta d^2.
\label{eq:threshold-condition}
\end{equation}
Consequently, for every finite $\eta$ and positive selection
benefit, there exist report profiles arbitrarily close to the
Top-$K$ threshold for which truthful reporting is not optimal.
\end{proposition}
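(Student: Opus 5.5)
The plan is to compare the Matched-Brier expected utility at the truthful report $r_i=q_i$ with its value at the deviation $r_i=q_i+d$, keeping $r_{-i}$ fixed. By hypothesis the allocation term changes by exactly the selection benefit:
\[
G_i\bigl(a(q_i+d,r_{-i})\bigr)-G_i\bigl(a(q_i,r_{-i})\bigr)=b_i .
\]
The scoring term loses $\eta\bigl[\bar S(q_i;q_i)-\bar S(q_i+d;q_i)\bigr]$. For the normalized Brier score of Eq.~\eqref{eq:normalized-brier}, expand $\bar S(r;q)=q\bigl(1-(r-1)^2\bigr)+(1-q)\bigl(1-r^2\bigr)$. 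The terms that depend on $r$ are $2qr-r^2$, so $\bar S(q;q)-\bar S(r;q)=(r-q)^2$ exactly. The expected utility gain from deviating is therefore
\[
\bar U_i^{\mathrm{MB}}(q_i+d,r_{-i})-\bar U_i^{\mathrm{MB}}(q_i,r_{-i})=b_i-\eta d^2 ,
\]
which is positive precisely when Eq.~\eqref{eq:threshold-condition} holds. Truthful reporting is then strictly dominated by the deviation. For a general strictly proper $S$, the same argument goes through with $d^2$ replaced by the exact score loss $\Delta(d):=\bar S(q_i;q_i)-\bar S(q_i+d;q_i)$. For the Brier instantiation in the paper, $\Delta(d)=d^2$.

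For the ``consequently'' part, I would build an explicit near-threshold Top-$K$ profile. Take the ranked allocation with $K$ slots and distinct reports among the others, and let $\tau$ be the $K$-th largest report in $r_{-i}$. Choose $q_i=\tau-\delta/2$ for small $\delta>0$, keeping $q_i\in[0,1)$. Under Eq.~\eqref{eq:allocation-rule}, agent $i$ is unselected at $r_i=q_i$. At $r_i=q_i+\delta$ it strictly exceeds $\tau$ and displaces the $K$-th agent, entering at least the last slot, so $d=\delta$. This displacement step is where I would be careful. I would argue directly that with weights $d_1\ge\dots\ge d_K>0$ the surrogate-maximizing allocation selects the $K$ highest reports, by a rearrangement/exchange argument, and that fixed report-independent tie-breaking does not interfere because the inequalities are strict.

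With $b_i$ the positive benefit of being selected, such as exposure $G_i$ at the lowest slot, the condition $b_i>\eta\delta^2$ holds for all $\delta<\sqrt{b_i/\eta}$. This threshold is positive for every finite $\eta$. Letting $\delta\to0$ gives profiles arbitrarily close to the threshold where truth-telling is not optimal.

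The main obstacle is this near-threshold construction, specifically ensuring the selection benefit stays bounded below by a fixed $b_i>0$ as $\delta\to0$. The quadratic computation is routine. I would handle the obstacle by assuming, as the statement does, that the benefit is $b_i>0$ independent of $d$, for example $G_i$ equal to the last-slot weight $d_K$. The exposition would then stress the contrast between a discrete jump in allocation benefit and a score loss that vanishes smoothly at second order.
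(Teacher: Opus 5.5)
Your proposal is correct and follows the paper's proof essentially step for step. You compute the exact Brier loss $\bar S(q_i;q_i)-\bar S(q_i+d;q_i)=d^2$, obtain the utility gain $b_i-\eta d^2$, and choose $0<d<\sqrt{b_i/\eta}$ with agent $i$ within $d$ of the Top-$K$ cutoff; your explicit profile ($q_i=\tau-\delta/2$ with distinct competing reports) and the exchange argument that the surrogate allocation selects the $K$ highest reports fill in the construction the paper only asserts.
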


The result follows because the benefit from crossing the selection
threshold is discrete, whereas the Brier loss caused by a small
report deviation is quadratic.

\subsection{Truthful Reporting under APSM}

For fixed reports $r_{-i}$ of the other agents, define
\begin{equation}
\Phi_i(a;q_i,r_{-i})
=
x_i(a)q_i
+
\sum_{j\neq i}x_j(a)r_j.
\label{eq:true-welfare-objective}
\end{equation}
Using the transfer in Equation~\eqref{eq:APSM-transfer}, agent
$i$'s expected utility can be written as
\begin{equation}
\begin{aligned}
\bar U_i^{\mathrm{APSM}}
&=
\Phi_i\!\left(a(r);q_i,r_{-i}\right)
-
H_i(r_{-i}) \\
&\quad+
\eta\bar S(r_i;q_i),
\end{aligned}
\label{eq:APSM-groves-form}
\end{equation}
where $H_i(r_{-i})$ is independent of $r_i$.

Without the proper-score term, truthful reporting is a weak best
response because it induces an allocation maximizing
Equation~\eqref{eq:true-welfare-objective}; however, other reports
that induce the same allocation may be equally optimal.
The proper score removes this ambiguity.

\begin{theorem}[Strict expected truthfulness]
\label{thm:APSM-truthfulness}
For every private belief $q_i\in[0,1]$ and every report profile
$r_{-i}$ of the other agents, $r_i=q_i$ uniquely maximizes the
expected APSM utility.
Moreover, for any alternative report $r_i$,
\begin{equation}
\bar U_i^{\mathrm{APSM}}(q_i,r_{-i})
-
\bar U_i^{\mathrm{APSM}}(r_i,r_{-i})
\ge
\eta c_S(r_i-q_i)^2.
\label{eq:APSM-utility-gap}
\end{equation}
\end{theorem}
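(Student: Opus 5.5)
The plan is a direct Groves-plus-properness decomposition of the expected utility in Equation~\eqref{eq:APSM-groves-form}. First I justify that representation. Substitute the transfer of Equation~\eqref{eq:APSM-transfer} into Eq.~\eqref{eq:quasilinear-utility}, so the term $G_i$ cancels exactly. Then take expectations over $y_i\sim\mathrm{Bernoulli}(q_i)$ and the audit $z_i$. Because $z_i$ is independent of $(r,y,\mathcal F_i)$ by Eq.~\eqref{eq:audit-exogeneity}, Eq.~\eqref{eq:audit-unbiasedness} applies. The allocation $a(r)$ is a deterministic function of reports under fixed tie-breaking, so $x_i(a(r))$ factors out of the expectation. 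The audited outcome credit therefore becomes $x_i(a(r))q_i$, and the audited score becomes $\eta\bar S(r_i;q_i)$. Expanding $p_i^{\mathrm{pivot}}$ then gives $\bar U_i = \Phi_i(a(r);q_i,r_{-i}) - H_i(r_{-i}) + \eta\bar S(r_i;q_i)$, where $H_i$ does not depend on $r_i$.

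The core step splits the utility gap into two pieces:
\begin{equation*}
\bar U_i(q_i,r_{-i})-\bar U_i(r_i,r_{-i})
=\bigl[\Phi_i(a(q_i,r_{-i}))-\Phi_i(a(r_i,r_{-i}))\bigr]
+\eta\bigl[\bar S(q_i;q_i)-\bar S(r_i;q_i)\bigr].
\end{equation*}
For the first bracket, note that when $r_i=q_i$ the profile $(q_i,r_{-i})$ gives the allocation rule Eq.~\eqref{eq:allocation-rule} exactly the objective $\sum_j x_j(a)r_j=\Phi_i(a;q_i,r_{-i})$. Hence $a(q_i,r_{-i})\in\arg\max_{a\in\mathcal A}\Phi_i(a;q_i,r_{-i})$. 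Since $a(r_i,r_{-i})$ is also feasible, the first bracket is nonnegative. This is the standard Groves argument, and it needs no assumption on how ties are broken. For the second bracket, strong properness Eq.~\eqref{eq:strong-properness} gives a lower bound of $\eta c_S(r_i-q_i)^2$. Adding the two bounds yields Eq.~\eqref{eq:APSM-utility-gap}.

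Uniqueness follows at once. For $r_i\neq q_i$ the gap is at least $\eta c_S(r_i-q_i)^2>0$, since $\eta>0$ and $c_S>0$. For the normalized Brier score one can check $\bar S(q;q)-\bar S(r;q)=(r-q)^2$, so $c_S=1$, but this is not needed for the general statement.

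The main obstacle is justifying the expectation step rather than the inequality. The Groves identity requires that the audited outcome credit have conditional mean $x_i(a(r))q_i$ even though the allocation depends on $r_i$ and the audit covers unselected candidates. I would handle this by conditioning on $r$ and applying the independence in Eq.~\eqref{eq:audit-exogeneity} together with $\epsilon_i>0$. I would also make explicit that $q_i$ is the agent's belief about its own $y_i$ given $\mathcal F_i$, and that $r_{-i}$ is held fixed. The statement is pointwise in $r_{-i}$, so no assumption on others' beliefs is required.
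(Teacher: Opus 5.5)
Your proposal is correct and matches the paper's proof: you split the utility gap into the Groves allocation difference, which is nonnegative because the truthful report makes the platform's objective coincide with $\Phi_i(\cdot;q_i,r_{-i})$, plus the scoring difference, which strong properness bounds below by $\eta c_S(r_i-q_i)^2$. Your additional derivation of the expected-utility representation via audit independence corresponds to what the paper does separately when deriving Equation~\eqref{eq:app-groves-form}, so it adds detail rather than a different route.
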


\paragraph{Proof sketch.}
Reporting $q_i$ makes the allocation rule maximize
$\Phi_i(a;q_i,r_{-i})$.
The pivot counterfactual $H_i(r_{-i})$ is independent of the
agent's report, while strong properness uniquely maximizes the
remaining score term at $r_i=q_i$.
Combining the allocation and scoring terms gives
Equation~\eqref{eq:APSM-utility-gap}.
The complete proof is provided in the appendix.

\subsection{Boundedly Rational Execution}

This section characterizes the utility-maximizing report, but an
LLM may only solve the induced optimization.
The agent-local executor makes this approximation explicit.

\begin{corollary}[Approximate execution]
\label{cor:executor-error}
Suppose the local executor returns $\hat r_i$ whose expected
utility is at most $\delta_i$ below the optimum:
\begin{equation}
\max_{r\in[0,1]}
\bar U_i^{\mathrm{APSM}}(r,r_{-i})
-
\bar U_i^{\mathrm{APSM}}(\hat r_i,r_{-i})
\le
\delta_i.
\end{equation}
Then
\begin{equation}
|\hat r_i-q_i|
\le
\sqrt{\frac{\delta_i}{\eta c_S}}.
\label{eq:executor-error}
\end{equation}
\end{corollary}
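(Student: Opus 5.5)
The corollary is a direct consequence of Theorem~\ref{thm:APSM-truthfulness}, so the plan is to chain the optimality of truthful reporting with the quadratic utility gap in Equation~\eqref{eq:APSM-utility-gap}. First I would identify the maximum appearing in the hypothesis. By Theorem~\ref{thm:APSM-truthfulness}, for fixed $r_{-i}$ the report $r_i=q_i$ uniquely maximizes $\bar U_i^{\mathrm{APSM}}(\cdot,r_{-i})$ over $[0,1]$. Since $q_i\in[0,1]$, the maximum is attained and
\begin{equation*}
\max_{r\in[0,1]}\bar U_i^{\mathrm{APSM}}(r,r_{-i})=\bar U_i^{\mathrm{APSM}}(q_i,r_{-i}).
\end{equation*}
The hypothesis then reads $\bar U_i^{\mathrm{APSM}}(q_i,r_{-i})-\bar U_i^{\mathrm{APSM}}(\hat r_i,r_{-i})\le\delta_i$.

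Second, I would apply Equation~\eqref{eq:APSM-utility-gap} with $r_i=\hat r_i$, which gives a lower bound on the same difference:
\begin{equation*}
\eta c_S(\hat r_i-q_i)^2\le \bar U_i^{\mathrm{APSM}}(q_i,r_{-i})-\bar U_i^{\mathrm{APSM}}(\hat r_i,r_{-i})\le\delta_i .
\end{equation*}
Dividing by $\eta c_S>0$ and taking square roots yields Equation~\eqref{eq:executor-error}. Here $\eta>0$ holds by design, and $c_S>0$ is implicit in strong properness; for the normalized Brier score, $\bar S(q;q)-\bar S(r;q)=(r-q)^2$, so $c_S=1$.

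There is no substantive obstacle once the theorem is granted. The only point needing care is that the gap bound of Theorem~\ref{thm:APSM-truthfulness} must hold for \emph{every} alternative report, including ones that change the allocation. That is exactly what the Groves form in Equation~\eqref{eq:APSM-groves-form} provides, because $\Phi_i(a(q_i,r_{-i});q_i,r_{-i})\ge\Phi_i(a(\hat r_i,r_{-i});q_i,r_{-i})$. As a result, allocation changes only add nonnegative slack on top of the scoring gap.

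I would also remark that the bound applies to any executor domain $\mathcal R\subseteq[0,1]$. The theorem's gap inequality holds pointwise in the report, so if the executor's optimum is taken over a discrete grid, $\delta_i$ should be read relative to the continuous maximum over $[0,1]$, as the statement specifies.
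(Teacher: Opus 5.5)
Your proposal is correct and follows the paper's proof: apply the gap inequality of Theorem~\ref{thm:APSM-truthfulness} at $r_i=\hat r_i$, bound the same utility difference above by $\delta_i$ using the hypothesis, divide by $\eta c_S$, and take square roots. Your explicit identification of the maximum with $\bar U_i^{\mathrm{APSM}}(q_i,r_{-i})$ is valid, though the argument only needs the weaker fact $\bar U_i^{\mathrm{APSM}}(q_i,r_{-i})\le\max_{r\in[0,1]}\bar U_i^{\mathrm{APSM}}(r,r_{-i})$, which holds simply because $q_i\in[0,1]$ is feasible.
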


Finally, for any report- and outcome-measurable transfer component
$F_i$, inverse-propensity auditing preserves its expectation:
\begin{equation}
\mathbb E\!\left[
\frac{z_i}{\epsilon_i}F_i
\right]
=
\mathbb E[F_i].
\end{equation}
Hence the audit rate preserves the expected best response,
although the transfer variance grows as $O(1/\epsilon_i)$.
The guarantee requires accurately modeled platform-controlled
selection benefits; under deterministic Top-$K$, even a residual
benefit can reintroduce threshold manipulation near the cutoff.

\section{Experiments}
\label{sec:experiments}

We study three research questions.
\textbf{RQ1} examines whether allocation incentives distort
confidence reports from actual LLM agents and whether
compensation-based mechanisms can be executed through a
shared agent-local utility interface.
\textbf{RQ2} uses controlled replay to compare reporting and
allocation outcomes in two task-grounded workflows.
\textbf{RQ3} isolates what the audited Groves component adds
beyond Compensated Proper Scoring (CPS) under exact and
approximate execution.
Additional analyses of sparse audits, payoff misspecification,
finite report grids, and approximate optimization appear in
the appendix.

\subsection{Experimental Setup}
\label{sec:exp_setup}

\noindent\textbf{Evaluation environments.}
We evaluate two six-agent environments using the same mixed
population of two Qwen3-8B, two Qwen3-4B
\cite{yang2025qwen3}, and two DeepSeek-V4-Pro agents
\cite{xu2026deepseek}, with three paired seeds. AgentWebBench
\cite{zhong2026agentwebbench} contains 200 web-retrieval
queries with Top-$2$ content-agent allocation and document-level
NDCG@5 evaluation. Our MultiAgentBench database extension
\cite{zhu2025multiagentbench} contains 50 PostgreSQL incidents
covering ten single- and dual-root-cause configurations, where
specialist agents communicate for two rounds before Top-$2$
escalation.

\noindent\textbf{Reference beliefs and execution.}
For agent $i$, $q_i$ is the frozen agent-local belief formed
before mechanism-dependent reporting and is not observed by
the platform. Controlled replay reuses the same candidates,
evidence, outcomes, report domains, allocation rules,
tie-breaking, audit randomness, and optimizer across
mechanisms. Exact execution denotes exact expected-utility
maximization over the finite admissible report domain.

We separately evaluate LLM-controlled utility-assisted
execution. A shared agent-local interface returns expected
utilities for admissible reports but neither identifies the
maximizer nor submits a report; the LLM retains control of
its final confidence. Matched-Brier, CPS, and APSM use
identical instances, prompts, tool access, and parsing rules,
differing only in their utility functions.

\noindent\textbf{Baselines.}
Selection-only includes the modeled allocation benefit.
Matched-Brier adds a scale-matched proper score without
compensation~\cite{gneiting2007strictly}. Difference Reward
uses a realized leave-one-out contribution
\cite{wolpert2001optimal}, while Pivot-only retains
compensation, audited outcome value, and the Clarke charge
but omits proper scoring
\cite{clarke1971multipart,groves1973incentives}.
CPS follows \citet{boutilier2011eliciting}:
\begin{equation}
T_i^{\mathrm{CPS}}
=
-G_i(a(r))
+
\eta\frac{z_i}{\epsilon_i}S(r_i,y_i).
\label{eq:cps-transfer-exp}
\end{equation}
It uses the same compensation and scoring parameters as
APSM. Allocation induced directly by $q$ serves only as a
non-deployable reference.

\noindent\textbf{Metrics.}
Reporting fidelity is measured by
\begin{equation}
\mathrm{RefDev}
=
\frac{1}{|\mathcal D|N}
\sum_{t\in\mathcal D}
\sum_{i=1}^{N}
|r_{i,t}-q_{i,t}|.
\label{eq:refdev}
\end{equation}
We also report all-candidate Brier loss, cutoff-crossing rate,
and exact reference-allocation match. For utility-assisted
execution, the utility-consistent choice rate is
\begin{equation}
\mathrm{UCR}
=
\frac{1}{|\mathcal D|N}
\sum_{t\in\mathcal D}
\sum_{i=1}^{N}
\mathbb I\!\left[
\max_{r\in\mathcal R_{i,t}}
\overline U_{i,t}(r)
-
\overline U_{i,t}(\widehat r_{i,t})
<10^{-6}
\right].
\label{eq:utility-consistent-choice}
\end{equation}
Downstream quality is measured by NDCG@5 for
AgentWebBench and Top-$1$, Top-$2$, and MRR for database
diagnosis. RQ3 additionally reports the utility margin against
allocation-changing reports.

\noindent\textbf{Statistical analysis.}
All comparisons are paired. LLM experiments use query-level
bootstrap; controlled AgentWebBench replay uses 10,000
seed--query hierarchical bootstrap draws. Database results use
5,000 incident-cluster bootstrap draws after averaging paired
differences over three specialist--model rotations. Agents
within the same query or incident are not treated as
independent observations.

\subsection{RQ1: Do Allocation Incentives Create a
Strategic-Calibration Failure?}
\label{sec:rq1}

\noindent\textbf{Calculator-free strategic reporting.}
We freeze each candidate, supporting evidence, and initial confidence,
then ask the same LLM to report under different public payoff conditions.
No local optimizer or utility tool is used. A neutral re-report controls
for ordinary elicitation variation.

Table~\ref{tab:calculator_free_reporting} shows that
Selection-only raises confidence by $0.218$ relative to Neutral
(95\% CI $[0.196,0.249]$), and $95.8\%$ of eligible agents cross the
Top-$K$ cutoff. Matched-Brier reduces the shift to $0.070$, but
$51.6\%$ still cross. Thus, proper scoring attenuates but does not remove
allocation-induced inflation under deterministic Top-$K$ selection.

\noindent\textbf{Tool-assisted LLM execution.}
We next test whether actual LLM agents can execute the
mechanisms when utility calculations are available while the
final report remains under LLM control.
Table~\ref{tab:tool_assisted_reporting} shows that both CPS
and APSM sharply reduce distortion relative to tool-assisted
Matched-Brier. RefDev decreases from $0.325$ to $0.007$
and $0.014$, respectively, while the cutoff-crossing rate
falls from $75.3\%$ to $2.1\%$ and $0.8\%$.

The two compensated mechanisms exhibit complementary
strengths. CPS achieves lower RefDev, a higher exact-match
rate ($92.5\%$ versus $85.5\%$; APSM--CPS:
$-7.0$ percentage points, 95\% CI $[-12.5,-1.5]$), and a
higher utility-consistent choice rate. APSM obtains lower
Brier loss ($0.403$ versus $0.412$; paired difference
$-0.0094$, 95\% CI $[-0.0132,-0.0059]$) and numerically
fewer cutoff crossings. Overall,
both compensation-based mechanisms can be executed by
actual LLM agents, while their relative advantages depend on
the reporting and allocation metric considered.

\begin{table}[t]
\centering
\small
\setlength{\tabcolsep}{2.2pt}
\begin{tabular*}{\columnwidth}{
@{\extracolsep{\fill}}lcccc@{}}
\toprule
\textbf{Condition}
& \shortstack{\textbf{Shift}\\\textbf{(95\% CI)}}
& \shortstack{\textbf{RefDev}\\$\downarrow$}
& \shortstack{\textbf{Cross.}\\$\downarrow$}
& \shortstack{\textbf{Parse err.}\\$\downarrow$} \\
\midrule
Neutral
& $0.000$
& $0.185$
& $0.0\%$
& $0.33\%$ \\

Selection
& \shortstack{$+0.218$\\$[+0.196,+0.249]$}
& $0.387$
& $95.8\%$
& $0.00\%$ \\

M-Brier
& \shortstack{$+0.070$\\$[+0.053,+0.088]$}
& $0.225$
& $51.6\%$
& $0.67\%$ \\
\bottomrule
\end{tabular*}
\caption{Calculator-free LLM reporting on AgentWebBench
(200 queries; 1,200 reports per condition). Shift is relative to Neutral.}
\label{tab:calculator_free_reporting}
\end{table}

\begin{table}[t]
\centering
\small
\setlength{\tabcolsep}{1.8pt}
\begin{tabular*}{\columnwidth}{
@{\extracolsep{\fill}}lccccc@{}}
\toprule
\textbf{Method}
& \shortstack{\textbf{RefDev}\\$\downarrow$}
& \shortstack{\textbf{Cross.}\\$\downarrow$}
& \shortstack{\textbf{Brier}\\$\downarrow$}
& \shortstack{\textbf{Match}\\$\uparrow$}
& \shortstack{\textbf{U-cons.}\\\textbf{choice $\uparrow$}} \\
\midrule
M-Brier
& $0.325$
& $75.3\%$
& $0.729$
& $0.0\%$
& $78.3\%$ \\

CPS
& $\mathbf{0.007}$
& $2.1\%$
& $0.412$
& $\mathbf{92.5\%}$
& $\mathbf{93.7\%}$ \\

APSM
& $0.014$
& $\mathbf{0.8\%}$
& $\mathbf{0.403}$
& $85.5\%$
& $86.7\%$ \\
\bottomrule
\end{tabular*}
\caption{Tool-assisted LLM execution on 200 paired
AgentWebBench queries. The shared interface returns report
utilities but neither identifies nor submits the maximizing
report. All methods achieve a 100\% tool-invocation rate;
U-cons.\ choice denotes the utility-consistent choice rate.}
\label{tab:tool_assisted_reporting}
\end{table}

\noindent\textbf{Interaction with collaborative calibration.}
Table~\ref{tab:cc_main} separates belief formation from
mechanism-dependent reporting. Collaborative Calibration
improves the frozen reference beliefs, reducing Brier loss from
$0.414$ to $0.198$, but lowers their induced NDCG@5 from
$0.453$ to $0.364$. Holding the revised beliefs fixed,
Matched-Brier introduces RefDev $0.221$, further reduces
NDCG@5 to $0.301$, and matches the CC-induced allocation
on only $2.5\%$ of queries. In controlled replay, APSM
recovers the revised reports and their induced allocation,
achieving zero RefDev, NDCG@5 of $0.364$, and a $100\%$
exact-match rate. Thus, collaborative calibration improves
belief quality, while incentive-compatible reporting prevents
subsequent allocation-induced distortion.

\begin{table}[t]
\centering
\caption{CC followed by strategic reporting. Results use
200 queries, three paired seeds, and six agents per query.}
\label{tab:cc_main}
\small
\setlength{\tabcolsep}{3.2pt}
\renewcommand{\arraystretch}{1.08}
\begin{tabular}{llcccc}
\toprule
\textbf{Stage}
& \textbf{Method}
& \textbf{Brier $\downarrow$}
& \textbf{RefDev $\downarrow$}
& \textbf{NDCG@5 $\uparrow$}
& \textbf{Match $\uparrow$} \\
\midrule
Belief
& Orig. ref.
& 0.414 & -- & 0.453 & -- \\

Belief
& CC ref.
& 0.198 & -- & 0.364 & -- \\
\midrule
Report
& M-Brier
& -- & 0.221 & 0.301 & 0.025 \\

Report
& APSM
& -- & \textbf{0.000}
& \textbf{0.364}
& \textbf{1.000} \\
\bottomrule
\end{tabular}
\end{table}

\begin{figure*}[ht]
    \centering
    \includegraphics[width=0.9\textwidth]
    {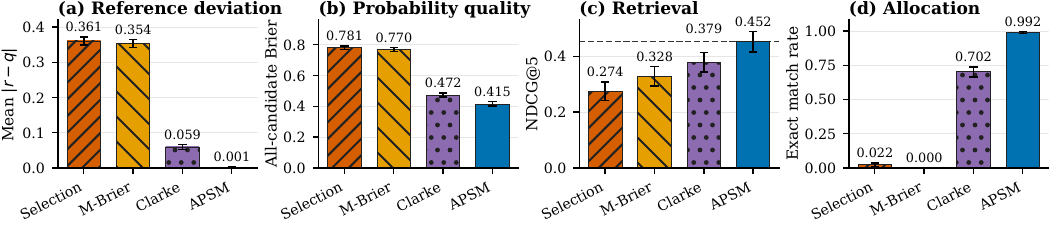}
    \caption{Controlled AgentWebBench replay over 200 queries and three
    seeds: \textbf{(a)} RefDev, \textbf{(b)} all-candidate Brier,
    \textbf{(c)} NDCG@5 (reference-belief oracle: 0.453), and \textbf{(d)} exact reference-allocation
    match. Error bars are 95\% hierarchical-bootstrap intervals.}
    \label{fig:agentweb_main}
\end{figure*}

\subsection{RQ2: How Do the Mechanisms Behave under
Controlled Task-Grounded Replay?}
\label{sec:rq2}

Having evaluated LLM-controlled execution in RQ1, we use controlled
replay to verify implementation consistency and decompose mechanism
behavior. The same mechanism-agnostic optimizer is used throughout;
these results are controlled mechanism analyses rather than evidence of
unaided LLM behavior.

\noindent\textbf{Content-agent allocation.}
Figure~\ref{fig:agentweb_main} summarizes the controlled
AgentWebBench replay over 200 queries and three paired seeds.
Matched-Brier retains RefDev $0.354$ and never reproduces the
reference-induced allocation, whereas APSM reduces RefDev to
$0.001$ and reaches $99.2\%$ exact match. APSM also improves
NDCG@5 from $0.328$ to $0.452$, within $0.001$ of the
reference-belief oracle. Pivot-only improves allocation
alignment but retains non-unique probability reports.

\noindent\textbf{Database root-cause diagnosis.}
In the MultiAgentBench workflow, six specialists compete for two
escalation positions. Table~\ref{tab:database_main} shows that controlled
APSM replay reduces RefDev from $0.339$ under Matched-Brier to zero,
improves Top-$2$ success from $0.627$ to $0.907$, and recovers the
reference-induced allocation. For dual-root incidents, Top-$1$ success
increases from $0.480$ to $1.000$. Together, the two workflows show how
the mechanism components affect ranked allocation under heterogeneous
agents, tool use, communication, and multiple root causes.

\begin{table*}[t]
\centering
\caption{Controlled MultiAgentBench database diagnosis over three paired
specialist--model rotations. All mechanisms reuse the same diagnoses,
evidence, beliefs, report set, Top-$2$ router, and optimizer.}
\label{tab:database_main}
\small
\setlength{\tabcolsep}{5.2pt}
\begin{tabular}{lcccccc}
\toprule
Mechanism &
Ref. Dev. $\downarrow$ &
Brier $\downarrow$ &
Top-2 $\uparrow$ &
Top-1 $\uparrow$ &
MRR $\uparrow$ &
Exact Ref. Match $\uparrow$ \\
\midrule
Selection-only
& 0.687 & 0.735 & 0.500 & 0.300 & 0.545 & 0.220 \\
Matched-Brier
& 0.339 & 0.378 & 0.627 & 0.567 & 0.678 & 0.000 \\
Difference Reward
& 0.344 & 0.382 & 0.827 & 0.740 & 0.823 & 0.193 \\
Pivot-only
& 0.531 & 0.546 & 0.880 & \textbf{0.853}
& 0.897 & 0.747 \\
APSM
& \textbf{0.000} & \textbf{0.057} & \textbf{0.907}
& \textbf{0.853} & \textbf{0.903} & \textbf{1.000} \\
\bottomrule
\end{tabular}
\end{table*}

\begin{figure*}[ht]
    \centering
    \includegraphics[width=1\textwidth]
    {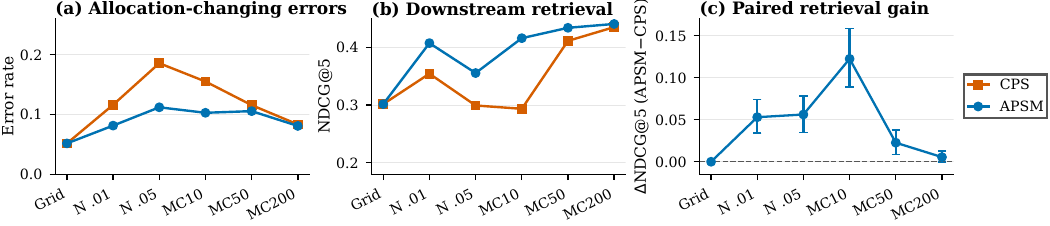}
    \caption{CPS and APSM under approximate AgentWebBench execution:
    \textbf{(a)} allocation-changing error rate,
    \textbf{(b)} NDCG@5, and
    \textbf{(c)} paired APSM--CPS NDCG@5 gain with 95\% confidence
    intervals.}
    \label{fig:cps_comparison}
\end{figure*}

\subsection{RQ3: What Does the Audited Groves Component
Add Beyond Compensated Proper Scoring?}
\label{sec:rq3}

CPS and APSM share exact influence compensation and the same
strictly proper score; their comparison isolates the audited
realized-value and pivot-externality terms.

\noindent\textbf{Exact execution.}
For this diagnostic, each report domain is augmented to include
the frozen belief $q_i$, and expected utility is maximized exactly
over the resulting finite set. CPS and APSM both achieve zero
RefDev, $100\%$ exact reference-allocation match, and
NDCG@5 of $0.453$, consistent with the classical
compensation-rule guarantee~\cite{boutilier2011eliciting}.
Thus, APSM's additional value lies not in exact truthfulness,
but in its allocation-sensitive utility margin under approximate
execution.

\noindent\textbf{Allocation-sensitive margin.}
We define
\begin{equation}
M_i^{\mathrm{alloc}}
=
\overline U_i(q_i,r_{-i})
-
\max_{\substack{r_i\in\mathcal R:\\
a(r_i,r_{-i})\neq a(q_i,r_{-i})}}
\overline U_i(r_i,r_{-i}).
\end{equation}
APSM increases the mean margin from $0.3350$ under CPS to
$0.6462$ ($1.93\times$). Beyond the proper-score loss in CPS,
an allocation-changing deviation under APSM also incurs the
belief-space welfare loss of the induced allocation.

\noindent\textbf{Approximate execution.}
Figure~\ref{fig:cps_comparison} compares the mechanisms under
grid discretization, additive utility noise, and Monte Carlo
estimation. At noise level $0.05$, APSM reduces
allocation-changing errors from $18.6\%$ to $11.2\%$ and
increases NDCG@5 from $0.299$ to $0.356$. Its advantage
diminishes as utility estimates improve, consistent with the
exact-execution results. This also shows why small report errors
can remain consequential.

\noindent\textbf{Controlling for utility scale.}
Figure~\ref{fig:utility_scale_control} shows that APSM's advantage
is not due solely to a larger utility range. After matching the mean
utility range of CPS to APSM with $\eta=1.43$, APSM still reduces
allocation errors from 17.2\% to 11.2\%, cutoff crossings from
14.3\% to 8.3\%, and improves NDCG@5 from 0.304 to 0.356.
Under 5\% utility-range-normalized noise, APSM further reduces
allocation errors from 17.1\% to 10.5\% and improves NDCG@5
from 0.284 to 0.341.

\begin{figure*}[t]
    \centering
    \includegraphics[width=1\textwidth]{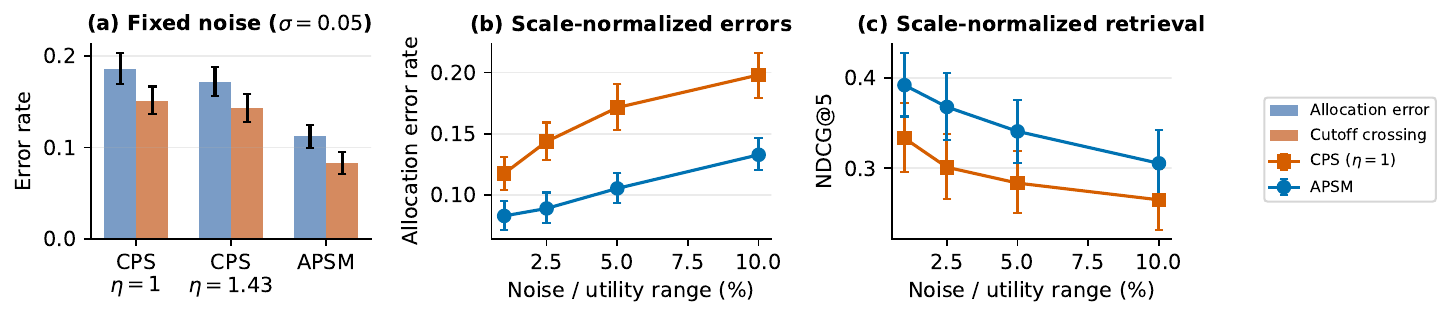}
    \caption{Utility-scale-controlled comparison of CPS and APSM.
    \textbf{(a)} Results under fixed utility noise.
    \textbf{(b)--(c)} Allocation error and NDCG@5 under
    utility-range-normalized noise. Error bars denote 95\% seed--query hierarchical-bootstrap
confidence intervals. }
    \label{fig:utility_scale_control}
\end{figure*}

\noindent\textbf{Payoff misspecification.}
At a representative $\pm5\%$ error in the modeled allocation
benefit, APSM retains a $46.5\%$ exact-match rate, compared
with approximately $15\%$ for CPS. APSM therefore degrades
more gradually, although neither mechanism is robust to
arbitrary unmodeled benefits.

\noindent\textbf{Sparse-audit trade-off.}
Sparse auditing preserves expected incentives but increases
settlement risk. At $\epsilon=0.1$, transfer variance reaches
$15.4\times$ and $30.1\times$ the full-audit level on
AgentWebBench and database diagnosis, with zero-audit rates
of $53.7\%$ and $52.9\%$. The complete sweeps appear in the
appendix.

\section{Related Work}

\paragraph{LLM calibration and strategic multi-agent behavior.}
Prior work improves confidence quality through post-hoc
calibration, prompted confidence elicitation, and
multi-calibration~\cite{guo2017calibration,jiang2021can,
kadavath2022language,lin2022teaching,
hebert2018multicalibration,detommaso2024multicalibration}.
Multi-agent approaches further refine confidence through
deliberation and critique~\cite{pandey2026refine,
yang2024confidence}, while broader LLM multi-agent systems
use collaboration, debate, and task decomposition to improve
reasoning~\cite{du2023improving,wu2023autogen,
park2023generative}. These methods generally assume
cooperative or aligned agents and treat confidence as a passive
expression of uncertainty. Recent evidence, however, shows
that LLM agents may adapt their behavior to persuasive or
utility-seeking objectives~\cite{agarwal2025persuasion,
su2025ai}. We study a complementary reporting-layer
problem: even with meaningful
agent-local beliefs, allocation incentives can distort the
confidence submitted to the coordinator.

\paragraph{Truthful elicitation and decision-dependent allocation.}
Truthful probabilistic reporting has been studied through
proper scoring rules~\cite{gneiting2007strictly}, Groves
mechanisms~\cite{vickrey1961counterspeculation,
clarke1971multipart,groves1973incentives}, peer
prediction~\cite{prelec2004bayesian,miller2005eliciting,
shnayder2016informed}, and multi-agent forecast
aggregation~\cite{wang2021forecast,lambert2008eliciting}.
Closest to our model, \citet{boutilier2011eliciting} shows
that compensating experts for forecast-induced decisions,
together with a strictly proper score, achieves strict
elicitation under exact compensation. We instantiate this
mechanism as CPS and do not claim novelty for exact
truthfulness. Decision-market research further studies
forecasts that affect both the chosen action and the observed
outcome~\cite{othman2010decision,chen2011information,
chen2014eliciting,oesterheld2020decision}. Our setting
instead concerns LLM agents whose relevance reports
determine scarce deterministic Top-$K$ allocation. APSM
preserves this allocation rule while introducing an audited
Groves term that enlarges the utility margin against
allocation-changing deviations under approximate execution.
Relatedly, \citet{doval2025calibrated} study mechanism
design by principals with calibrated beliefs about agent
types, whereas we elicit reports from strategic agents under
sparse direct verification.

\section{Conclusion and Limitations}
\label{sec:conclusion}

We study strategic confidence reporting in LLM-based
multi-agent systems, where confidence affects both uncertainty
expression and downstream allocation. We introduce APSM,
an audited allocation-sensitive elicitation mechanism that
offsets platform-controlled benefits, internalizes
report-induced externalities, and uniquely elicits relevance
probabilities. Calculator-free experiments show systematic
confidence inflation under hard allocation incentives. With
agent-local utility assistance, actual LLM agents substantially
reduce this distortion under both CPS and APSM, while
controlled replay shows that APSM enlarges the utility margin
against allocation-changing deviations and improves robustness
under approximate execution. Our guarantees assume fixed candidate content, risk-neutral
myopic agents, report-independent auditing, and accurately
modeled platform-controlled benefits. The reference belief is
an agent-local proxy rather than observable truth, and APSM is
not uniformly easier to execute than CPS. The framework does not yet address external benefits, collusion,
budget balance, or strategic generation; broader strategic settings remain future work.

\bibliography{neurips_2026}

\clearpage
\appendix
\section{Appendix}
\label{app:full-mechanism}

This section provides the complete formalization of the
\textbf{Audited Pivot Scoring Mechanism (APSM)}.
We first specify the repeated coordination protocol and information
structure, and then derive the complete transfer and expected-utility
representation used by the theoretical analysis.

\subsection{Repeated Coordination Protocol}
\label{app:protocol}

We consider a repeated interaction between a platform and a
persistent population of $N$ content agents over rounds
$t=1,\dots,T$.
At round $t$, the environment provides a query
$x_t\in\mathcal X$.
Each agent $i\in[N]$ generates a candidate item
$c_{i,t}$, such as a retrieved document, evidence passage, or
candidate answer.

Candidate generation is completed before confidence elicitation.
The candidate set
\begin{equation}
\mathcal C_t
=
\{c_{1,t},\dots,c_{N,t}\}
\label{eq:app-candidate-set}
\end{equation}
is therefore fixed when reports are submitted.
The strategic action studied in this work is the confidence report,
rather than the candidate-generation policy.

Each candidate has a binary relevance outcome
\begin{equation}
y_{i,t}\in\{0,1\},
\label{eq:app-relevance-outcome}
\end{equation}
where $y_{i,t}=1$ indicates that candidate $c_{i,t}$ is relevant to
the current query.
Agent $i$ forms a private relevance belief
\begin{equation}
q_{i,t}
:=
\Pr\!\left(
y_{i,t}=1
\mid
x_t,c_{i,t},\mathcal F_{i,t}
\right),
\label{eq:app-private-belief}
\end{equation}
where $\mathcal F_{i,t}$ denotes its local information state.
The platform does not observe $q_{i,t}$.
Instead, the agent submits a report
\begin{equation}
r_{i,t}\in[0,1].
\label{eq:app-report}
\end{equation}

We write
\begin{equation}
q_t=(q_{1,t},\dots,q_{N,t}),
\qquad
r_t=(r_{1,t},\dots,r_{N,t}).
\label{eq:app-belief-report-vectors}
\end{equation}

\subsection{Allocation and Surrogate Welfare}
\label{app:allocation}

Let $\mathcal A_t$ denote the set of feasible allocations at round
$t$.
An allocation $a\in\mathcal A_t$ assigns at most $K$ agents to
ranked coordination positions.
For agent $i$, define
\begin{equation}
x_{i,t}(a)
\in
\{0,d_1,\dots,d_K\},
\label{eq:app-position-value}
\end{equation}
where
\begin{equation}
d_1\ge d_2\ge\dots\ge d_K>0
\end{equation}
is a position-dependent value and $x_{i,t}(a)=0$ if agent $i$ is
not selected.

Given the report profile $r_t$, the platform evaluates an allocation
through the additive report-space welfare
\begin{equation}
\widetilde W_t(a;r_t)
=
\sum_{j=1}^{N}
x_{j,t}(a)r_{j,t}.
\label{eq:app-reported-welfare}
\end{equation}
The selected allocation is
\begin{equation}
a_t(r_t)
\in
\arg\max_{a\in\mathcal A_t}
\widetilde W_t(a;r_t).
\label{eq:app-allocation-rule}
\end{equation}
Ties are resolved according to a fixed rule that does not depend on
the identity or report history of the tied agents.

The corresponding belief-space surrogate welfare is
\begin{equation}
\widetilde W_t(a;q_t)
=
\sum_{j=1}^{N}
x_{j,t}(a)q_{j,t}.
\label{eq:app-belief-welfare}
\end{equation}
The mechanism seeks to induce reports that make
Equation~\eqref{eq:app-allocation-rule} maximize
Equation~\eqref{eq:app-belief-welfare}.

\subsection{Platform-Controlled Selection Benefit}
\label{app:selection-benefit}

An agent may obtain a non-outcome benefit through selection,
exposure, traffic, or future platform-controlled opportunities.
We denote this benefit by
\begin{equation}
G_{i,t}(a).
\label{eq:app-selection-benefit}
\end{equation}

The main theoretical guarantee assumes that $G_{i,t}$ is publicly
specified and controlled by the platform.
For example, it may represent an exposure credit, a query quota, or
a platform-internal selection reward.
The mechanism explicitly offsets this benefit in its transfer.

This assumption does not cover external benefits that the
platform cannot observe or control. Sensitivity to a
misspecified selection benefit is analyzed below.

\subsection{Pivot Externality Charge}
\label{app:pivot-charge}

For agent $i$, let $\mathcal A_t^{(-i)}$ denote the feasible
allocation set after removing agent $i$.
The maximum report-space welfare attainable by the remaining agents
is
\begin{equation}
H_{i,t}(r_{-i,t})
=
\max_{a\in\mathcal A_t^{(-i)}}
\sum_{j\neq i}
x_{j,t}(a)r_{j,t}.
\label{eq:app-counterfactual-welfare}
\end{equation}

The Clarke pivot charge is
\begin{equation}
\begin{aligned}
p_{i,t}^{\mathrm{pivot}}(r_t)
&=
H_{i,t}(r_{-i,t}) \\
&\quad-
\sum_{j\neq i}
x_{j,t}\!\left(a_t(r_t)\right)r_{j,t}.
\end{aligned}
\label{eq:app-pivot-charge}
\end{equation}
It measures the report-space opportunity cost imposed on the other
agents by the participation and allocation of agent $i$.

The counterfactual in
Equation~\eqref{eq:app-counterfactual-welfare} re-optimizes the
allocation after removing $i$.
It is therefore different from a realized leave-one-out reward that
merely deletes an item from an already chosen set.

\subsection{Sparse Outcome Verification}
\label{app:sparse-verification}

The platform verifies the relevance outcome of candidate $i$ with
probability $\epsilon_{i,t}>0$.
Let
\begin{equation}
z_{i,t}
\sim
\mathrm{Bernoulli}(\epsilon_{i,t})
\label{eq:app-audit-indicator}
\end{equation}
denote the audit indicator.

Audit sampling is performed after report submission and is
conditionally independent of the current reports and outcomes:
\begin{equation}
z_{i,t}
\perp
(r_t,y_t)
\mid
\mathcal F_{t-1},x_t.
\label{eq:app-audit-exogeneity}
\end{equation}

The inverse-propensity realized-value credit is
\begin{equation}
V_{i,t}^{\mathrm{out}}
=
\frac{z_{i,t}}{\epsilon_{i,t}}
x_{i,t}\!\left(a_t(r_t)\right)y_{i,t}.
\label{eq:app-outcome-credit}
\end{equation}
Under the private belief $q_{i,t}$,
\begin{equation}
\mathbb E
\left[
V_{i,t}^{\mathrm{out}}
\mid
\mathcal F_{i,t}
\right]
=
x_{i,t}\!\left(a_t(r_t)\right)q_{i,t}.
\label{eq:app-outcome-credit-expectation}
\end{equation}

\subsection{Strictly Proper Score}
\label{app:proper-score}

Let
\begin{equation}
S:[0,1]\times\{0,1\}\rightarrow\mathbb R
\end{equation}
be a strictly proper scoring rule.
For a private belief $q$, define the expected score
\begin{equation}
\overline S(r;q)
=
qS(r,1)+(1-q)S(r,0).
\label{eq:app-expected-score}
\end{equation}

We assume that $S$ is $c_S$-strongly proper:
\begin{equation}
\overline S(q;q)
-
\overline S(r;q)
\ge
c_S(r-q)^2,
\qquad
\forall r,q\in[0,1].
\label{eq:app-strong-properness}
\end{equation}

The sparse proper-score credit is
\begin{equation}
V_{i,t}^{\mathrm{score}}
=
\eta
\frac{z_{i,t}}{\epsilon_{i,t}}
S(r_{i,t},y_{i,t}),
\label{eq:app-score-credit}
\end{equation}
where $\eta>0$ controls the strength of probability elicitation.

In our implementation, we use the normalized Brier score
\begin{equation}
S(r,y)
=
1-(r-y)^2.
\label{eq:app-brier-score}
\end{equation}
For this normalization,
\begin{equation}
\overline S(q;q)-\overline S(r;q)
=
(r-q)^2,
\end{equation}
and therefore $c_S=1$.

\subsection{Complete APSM Transfer}
\label{app:complete-transfer}

The APSM transfer is
\begin{equation}
\begin{aligned}
T_{i,t}^{\mathrm{APSM}}
&=
-
G_{i,t}\!\left(a_t(r_t)\right)
+
V_{i,t}^{\mathrm{out}} \\
&\quad-
p_{i,t}^{\mathrm{pivot}}(r_t)
+
V_{i,t}^{\mathrm{score}}.
\end{aligned}
\label{eq:app-APSM-transfer}
\end{equation}

Agent $i$ also receives the original platform-controlled selection
benefit $G_{i,t}(a_t(r_t))$.
Its total utility is therefore
\begin{equation}
U_{i,t}^{\mathrm{APSM}}
=
G_{i,t}\!\left(a_t(r_t)\right)
+
T_{i,t}^{\mathrm{APSM}}.
\label{eq:app-total-utility}
\end{equation}

Substituting Equation~\eqref{eq:app-APSM-transfer} into
Equation~\eqref{eq:app-total-utility} gives
\begin{equation}
\begin{aligned}
U_{i,t}^{\mathrm{APSM}}
&=
\frac{z_{i,t}}{\epsilon_{i,t}}
x_{i,t}\!\left(a_t(r_t)\right)y_{i,t} \\
&\quad-
p_{i,t}^{\mathrm{pivot}}(r_t)
+
\eta
\frac{z_{i,t}}{\epsilon_{i,t}}
S(r_{i,t},y_{i,t}).
\end{aligned}
\label{eq:app-cancelled-utility}
\end{equation}

Taking expectation with respect to both audit and outcome randomness
under the private belief $q_{i,t}$ yields
\begin{equation}
\begin{aligned}
\overline U_{i,t}^{\mathrm{APSM}}
&=
x_{i,t}\!\left(a_t(r_t)\right)q_{i,t}
-
p_{i,t}^{\mathrm{pivot}}(r_t) \\
&\quad+
\eta\overline S(r_{i,t};q_{i,t}).
\end{aligned}
\label{eq:app-expected-utility}
\end{equation}

Using Equation~\eqref{eq:app-pivot-charge}, we obtain
\begin{equation}
\begin{aligned}
\overline U_{i,t}^{\mathrm{APSM}}
&=
x_{i,t}\!\left(a_t(r_t)\right)q_{i,t}
+
\sum_{j\neq i}
x_{j,t}\!\left(a_t(r_t)\right)r_{j,t} \\
&\quad-
H_{i,t}(r_{-i,t})
+
\eta\overline S(r_{i,t};q_{i,t}).
\end{aligned}
\label{eq:app-groves-form}
\end{equation}

The term $H_{i,t}(r_{-i,t})$ is independent of $r_{i,t}$.
Equation~\eqref{eq:app-groves-form} is the Groves-form
representation used throughout the theoretical analysis.

\subsection{Agent-Local Policy Optimizer}
\label{app:executor}

The agent-local executor receives:

\begin{enumerate}[leftmargin=*,itemsep=2pt]
\item the agent's private belief $q_{i,t}$;
\item the public reports $r_{-i,t}$ of the other agents;
\item the public allocation and transfer rules;
\item the audit probability $\epsilon_{i,t}$; and
\item a finite candidate report set $\mathcal R$.
\end{enumerate}

It evaluates the expected utility of each candidate report:
\begin{equation}
\overline U_{i,t}(r)
=
\mathbb E_{
Y\sim\mathrm{Bernoulli}(q_{i,t}),\,Z
}
\left[
U_{i,t}(r,r_{-i,t},Y,Z)
\right].
\label{eq:app-report-utility}
\end{equation}
The submitted report is
\begin{equation}
\widehat r_{i,t}
\in
\arg\max_{r\in\mathcal R}
\mathcal U_{i,t}(r).
\label{eq:app-local-argmax}
\end{equation}

The same executor is used for all mechanisms.
Only the public utility function changes.
The executor therefore does not encode a rule such as
$\widehat r_{i,t}=q_{i,t}$.

The executor runs within the agent's local trust boundary.
Only $\widehat r_{i,t}$ is released to the platform.
The private belief $q_{i,t}$ is neither exposed to nor used by the
platform allocation procedure.

\section{Proofs of the Main Results}
\label{app:proofs}

\subsection{Proof of Proposition~\ref{prop:threshold-manipulation}}
\label{app:proof-threshold}

\begin{proof}
Under the normalized Brier score,
\begin{equation}
\overline S(q_i;q_i)
-
\overline S(q_i+d;q_i)
=
d^2.
\label{eq:app-brier-deviation-cost}
\end{equation}

By assumption, reporting $q_i+d$ changes the allocation so that
agent $i$ receives an additional selection benefit $b_i$.
The expected utility difference between the deviating and truthful
reports is therefore
\begin{equation}
\begin{aligned}
&
\overline U_i^{\mathrm{MB}}(q_i+d,r_{-i})
-
\overline U_i^{\mathrm{MB}}(q_i,r_{-i}) \\
&\qquad=
b_i-\eta d^2.
\end{aligned}
\label{eq:app-threshold-gain}
\end{equation}
The deviation is profitable whenever
\begin{equation}
b_i>\eta d^2.
\end{equation}

For any finite $\eta$ and any $b_i>0$, one can select
\begin{equation}
0<d<\sqrt{\frac{b_i}{\eta}}
\end{equation}
and construct a report profile in which agent $i$ lies within $d$
of the Top-$K$ cutoff.
Hence there always exist sufficiently close threshold profiles for
which truthful reporting is not optimal.
\end{proof}

\subsection{Pivot-Only Allocation Alignment}
\label{app:pivot-only}

For fixed $r_{-i,t}$, define
\begin{equation}
\Phi_{i,t}(a;q_{i,t},r_{-i,t})
=
x_{i,t}(a)q_{i,t}
+
\sum_{j\neq i}
x_{j,t}(a)r_{j,t}.
\label{eq:app-true-allocation-objective}
\end{equation}

\begin{proposition}[Pivot-only allocation alignment]
\label{prop:app-pivot-only}
Suppose the APSM transfer is used with $\eta=0$.
Then reporting $r_{i,t}=q_{i,t}$ is a weak expected-utility best
response.
The complete set of optimal reports is
\begin{equation}
\begin{aligned}
\mathcal R_{i,t}^{\star}
=
\bigl\{
r_i\in[0,1]:
a_t(r_i,r_{-i,t})
\in
\arg\max_{a\in\mathcal A_t}
\Phi_{i,t}(a;q_{i,t},r_{-i,t})
\bigr\}.
\end{aligned}
\label{eq:app-pivot-optimal-set}
\end{equation}
\end{proposition}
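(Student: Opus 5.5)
Setting $\eta=0$ in the Groves-form representation of Equation~\eqref{eq:app-groves-form} reduces agent $i$'s expected utility to
\begin{equation*}
\overline U_{i,t}(r_i,r_{-i,t})
=
\Phi_{i,t}\!\left(a_t(r_i,r_{-i,t});q_{i,t},r_{-i,t}\right)
-
H_{i,t}(r_{-i,t}).
\end{equation*}
The derivation carries over unchanged: the selection benefit cancels exactly, and Equation~\eqref{eq:app-outcome-credit-expectation} gives the expected audited outcome credit. The second term does not depend on $r_i$. Maximizing expected utility over $r_i\in[0,1]$ is therefore equivalent to maximizing $r_i\mapsto\Phi_{i,t}(a_t(r_i,r_{-i,t});q_{i,t},r_{-i,t})$. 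This reduction is the first step.

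Next I bound the reduced objective. For every report $r_i$, the induced allocation $a_t(r_i,r_{-i,t})$ is feasible, so
\begin{equation*}
\Phi_{i,t}(a_t(r_i,r_{-i,t});q_{i,t},r_{-i,t})
\le
\Phi^\star
:=
\max_{a\in\mathcal A_t}\Phi_{i,t}(a;q_{i,t},r_{-i,t}).
\end{equation*}
This bound is attained by truthful reporting. When $r_i=q_{i,t}$, the report-space welfare $\widetilde W_t(a;(q_{i,t},r_{-i,t}))$ coincides with $\Phi_{i,t}(a;q_{i,t},r_{-i,t})$ for every $a$. The allocation rule \eqref{eq:app-allocation-rule} then selects a maximizer of $\Phi_{i,t}$, whichever maximizer the fixed tie-breaking rule picks. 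So truthful reporting achieves $\Phi^\star$, which shows it is a weak best response.

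The optimal set follows by comparing each report with this supremum. A report $r_i$ is optimal if and only if its induced allocation attains $\Phi^\star$. Since the supremum is attained, that holds exactly when $a_t(r_i,r_{-i,t})\in\arg\max_{a}\Phi_{i,t}(a;q_{i,t},r_{-i,t})$, which is the set \eqref{eq:app-pivot-optimal-set}. Both inclusions are immediate.

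The only delicate point is tie handling. We need the truthful report to induce an element of the argmax regardless of tie-breaking, and we need the optimal set to be phrased via argmax membership rather than equality with a particular allocation. Both follow because $a_t$ always returns some maximizer of the report-space welfare. I would also note that $\arg\max$ is nonempty because $\mathcal A_t$ is finite.
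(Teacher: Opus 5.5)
Your proof is correct and follows the paper's argument essentially step for step. You set $\eta=0$ in the Groves form to obtain $\Phi_{i,t}(a_t(r_t);q_{i,t},r_{-i,t})-H_{i,t}(r_{-i,t})$, note that truthful reporting makes the allocation rule maximize $\Phi_{i,t}$, and identify the optimal set as exactly the reports whose induced allocation lies in that argmax; your remarks on tie-breaking and nonemptiness of the argmax make explicit what the paper leaves implicit.
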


\begin{proof}
When $\eta=0$, Equation~\eqref{eq:app-groves-form} becomes
\begin{equation}
\overline U_{i,t}^{\mathrm{Pivot}}
=
\Phi_{i,t}
\!\left(
a_t(r_t);
q_{i,t},r_{-i,t}
\right)
-
H_{i,t}(r_{-i,t}).
\label{eq:app-pivot-only-utility}
\end{equation}
The second term is independent of $r_{i,t}$.

If agent $i$ reports $q_{i,t}$, the platform allocation rule
maximizes
\begin{equation}
x_{i,t}(a)q_{i,t}
+
\sum_{j\neq i}
x_{j,t}(a)r_{j,t},
\end{equation}
which is exactly
$\Phi_{i,t}(a;q_{i,t},r_{-i,t})$.
No alternative report can induce an allocation with a larger value
of $\Phi_{i,t}$.

Every report that induces an allocation maximizing $\Phi_{i,t}$ is
therefore utility-equivalent.
The set of such reports is precisely
$\mathcal R_{i,t}^{\star}$.
\end{proof}

\subsection{Proof of Theorem~\ref{thm:APSM-truthfulness}}
\label{app:proof-truthfulness}

\begin{proof}
Fix agent $i$, private belief $q_{i,t}$, and reports $r_{-i,t}$ of
the other agents.
Let
\begin{equation}
a_t^{\star}
\in
\arg\max_{a\in\mathcal A_t}
\Phi_{i,t}(a;q_{i,t},r_{-i,t}).
\label{eq:app-efficient-allocation}
\end{equation}

When agent $i$ reports $q_{i,t}$, the platform maximizes
\begin{equation}
x_{i,t}(a)q_{i,t}
+
\sum_{j\neq i}
x_{j,t}(a)r_{j,t},
\end{equation}
and therefore selects an allocation in the set defined by
Equation~\eqref{eq:app-efficient-allocation}.

For an arbitrary alternative report $r_{i,t}$,
Equation~\eqref{eq:app-groves-form} gives
\begin{equation}
\begin{aligned}
&
\overline U_{i,t}^{\mathrm{APSM}}
(q_{i,t},r_{-i,t})
-
\overline U_{i,t}^{\mathrm{APSM}}
(r_{i,t},r_{-i,t}) \\
&=
\Phi_{i,t}
\!\left(
a_t(q_{i,t},r_{-i,t});
q_{i,t},r_{-i,t}
\right) \\
&\quad-
\Phi_{i,t}
\!\left(
a_t(r_{i,t},r_{-i,t});
q_{i,t},r_{-i,t}
\right) \\
&\quad+
\eta
\left[
\overline S(q_{i,t};q_{i,t})
-
\overline S(r_{i,t};q_{i,t})
\right].
\end{aligned}
\label{eq:app-truthfulness-gap}
\end{equation}

The first difference is nonnegative because
$a_t(q_{i,t},r_{-i,t})$ maximizes $\Phi_{i,t}$.
By strong properness,
\begin{equation}
\overline S(q_{i,t};q_{i,t})
-
\overline S(r_{i,t};q_{i,t})
\ge
c_S(r_{i,t}-q_{i,t})^2.
\end{equation}
Thus,
\begin{equation}
\begin{aligned}
&
\overline U_{i,t}^{\mathrm{APSM}}
(q_{i,t},r_{-i,t})
-
\overline U_{i,t}^{\mathrm{APSM}}
(r_{i,t},r_{-i,t}) \\
&\qquad\ge
\eta c_S(r_{i,t}-q_{i,t})^2.
\end{aligned}
\label{eq:app-strict-utility-gap}
\end{equation}

If $r_{i,t}\neq q_{i,t}$ and $\eta>0$, the right-hand side is
strictly positive.
Therefore $r_{i,t}=q_{i,t}$ is the unique expected-utility best
response.
\end{proof}

\subsection{Proof of Corollary~\ref{cor:executor-error}}
\label{app:proof-executor}

\begin{proof}
By Theorem~\ref{thm:APSM-truthfulness},
\begin{equation}
\begin{aligned}
&
\overline U_{i,t}^{\mathrm{APSM}}
(q_{i,t},r_{-i,t})
-
\overline U_{i,t}^{\mathrm{APSM}}
(\widehat r_{i,t},r_{-i,t}) \\
&\qquad\ge
\eta c_S
(\widehat r_{i,t}-q_{i,t})^2.
\end{aligned}
\end{equation}
The executor suboptimality assumption implies that the left-hand
side is at most $\delta_{i,t}$.
Therefore
\begin{equation}
\eta c_S
(\widehat r_{i,t}-q_{i,t})^2
\le
\delta_{i,t}.
\end{equation}
Taking square roots gives
\begin{equation}
\left|
\widehat r_{i,t}-q_{i,t}
\right|
\le
\sqrt{
\frac{\delta_{i,t}}
{\eta c_S}
}.
\end{equation}
\end{proof}

\section{Additional Theoretical Analysis}
\label{app:additional-theory}

\subsection{Sparse-Audit Unbiasedness and Variance}
\label{app:sparse-variance}

Let
\begin{equation}
F_{i,t}
=
x_{i,t}\!\left(a_t(r_t)\right)y_{i,t}
+
\eta S(r_{i,t},y_{i,t})
\label{eq:app-outcome-component}
\end{equation}
denote the outcome-dependent component of the APSM transfer.

\begin{proposition}[Sparse-audit transfer]
\label{prop:app-sparse-audit}
Suppose $|F_{i,t}|\le M$ and audit sampling is independent of
$F_{i,t}$ conditional on the pre-round information.
Then
\begin{equation}
\mathbb E
\left[
\frac{z_{i,t}}{\epsilon_{i,t}}
F_{i,t}
\right]
=
\mathbb E[F_{i,t}],
\label{eq:app-ips-unbiased}
\end{equation}
and
\begin{equation}
\operatorname{Var}
\left(
\frac{z_{i,t}}{\epsilon_{i,t}}
F_{i,t}
\right)
=
\frac{\mathbb E[F_{i,t}^{2}]}{\epsilon_{i,t}}
-
\left(\mathbb E[F_{i,t}]\right)^2.
\label{eq:app-ips-exact-variance}
\end{equation}
Consequently,
\begin{equation}
\operatorname{Var}
\left(
\frac{z_{i,t}}{\epsilon_{i,t}}
F_{i,t}
\right)
\le
\frac{M^2}{\epsilon_{i,t}}.
\label{eq:app-ips-variance-bound}
\end{equation}
\end{proposition}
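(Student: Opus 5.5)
The plan is to treat the audit indicator as a Bernoulli$(\epsilon_{i,t})$ variable independent of $F_{i,t}$ given the pre-round information, and compute everything by conditioning on that information and then applying the tower property. Throughout, fix $\epsilon=\epsilon_{i,t}>0$, which is measurable with respect to the pre-round information. Write $W=\frac{z_{i,t}}{\epsilon}$, so that $W\in\{0,1/\epsilon\}$.

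First I will check the conditional moments of $W$. Conditional independence gives $\mathbb E[W\mid F_{i,t}]=\mathbb E[z_{i,t}]/\epsilon=1$. Because $z_{i,t}^2=z_{i,t}$, we also get $\mathbb E[W^2\mid F_{i,t}]=\epsilon/\epsilon^2=1/\epsilon$.

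For unbiasedness, I condition on $F_{i,t}$ (jointly with the pre-round information) and factor:
\begin{equation*}
\mathbb E[WF_{i,t}]=\mathbb E\bigl[F_{i,t}\,\mathbb E[W\mid F_{i,t}]\bigr]=\mathbb E[F_{i,t}].
\end{equation*}
This is just the argument already used for Eq.~\eqref{eq:audit-unbiasedness}. Boundedness $|F_{i,t}|\le M$ guarantees that all expectations exist.

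For the exact variance, I apply the same factorization to the second moment:
\begin{equation*}
\mathbb E[W^2F_{i,t}^2]=\frac{\mathbb E[F_{i,t}^2]}{\epsilon}.
\end{equation*}
Subtracting the square of the mean from the first step gives Eq.~\eqref{eq:app-ips-exact-variance}. For the bound, I drop the nonpositive term $-(\mathbb E[F_{i,t}])^2$ and use $F_{i,t}^2\le M^2$, which yields $M^2/\epsilon$.

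The only delicate point is making sure the independence assumption covers $F_{i,t}$ as a whole, since it depends on $r_t$, $y_t$, and the allocation. Eq.~\eqref{eq:app-audit-exogeneity} gives independence from $(r_t,y_t)$, and $F_{i,t}$ is a measurable function of these, so that suffices. If $\epsilon$ is allowed to be random pre-round, I will state each identity conditionally on the pre-round information and then take an outer expectation. The variance identity should then be read conditionally, since otherwise an extra between-conditions term would appear.
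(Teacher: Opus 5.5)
Your proposal is correct and follows the paper's proof essentially step for step. You condition on $F_{i,t}$ to factor out $\mathbb E[z_{i,t}]/\epsilon_{i,t}=1$, use $z_{i,t}^2=z_{i,t}$ to obtain the second moment $\mathbb E[F_{i,t}^2]/\epsilon_{i,t}$, subtract the squared mean, and bound via $F_{i,t}^2\le M^2$; your added remark that the variance identity must be read conditionally on the pre-round information when $\epsilon_{i,t}$ is random is a worthwhile precision the paper leaves implicit.
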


\begin{proof}
Conditional independence gives
\begin{equation}
\begin{aligned}
\mathbb E
\left[
\frac{z_{i,t}}{\epsilon_{i,t}}
F_{i,t}
\right]
&=
\mathbb E
\left[
\mathbb E
\left[
\frac{z_{i,t}}{\epsilon_{i,t}}
F_{i,t}
\mid
F_{i,t}
\right]
\right] \\
&=
\mathbb E
\left[
F_{i,t}
\frac{\mathbb E[z_{i,t}]}{\epsilon_{i,t}}
\right] \\
&=
\mathbb E[F_{i,t}].
\end{aligned}
\end{equation}

Moreover,
\begin{equation}
\begin{aligned}
\mathbb E
\left[
\left(
\frac{z_{i,t}}{\epsilon_{i,t}}
F_{i,t}
\right)^2
\right]
&=
\mathbb E
\left[
\frac{z_{i,t}}{\epsilon_{i,t}^2}
F_{i,t}^2
\right] \\
&=
\frac{\mathbb E[F_{i,t}^2]}{\epsilon_{i,t}}.
\end{aligned}
\end{equation}
Subtracting the squared mean yields
Equation~\eqref{eq:app-ips-exact-variance}.
The final bound follows from $F_{i,t}^2\le M^2$.
\end{proof}

Proposition~\ref{prop:app-sparse-audit} explains why changing the
audit rate does not alter the expected best response of a
risk-neutral agent-local executor, while lower audit rates increase
the variance and tail magnitude of realized transfers.

\subsection{Compensated Proper Scoring}
\label{app:compensated-brier}

Exact influence compensation and strictly proper scoring already
form a truthful analytical benchmark.

Consider the transfer
\begin{equation}
T_{i,t}^{\mathrm{CPS}}
=
-
G_{i,t}\!\left(a_t(r_t)\right)
+
\eta
\frac{z_{i,t}}{\epsilon_{i,t}}
S(r_{i,t},y_{i,t}).
\label{eq:app-compensated-brier}
\end{equation}
The resulting total utility is
\begin{equation}
U_{i,t}^{\mathrm{CPS}}
=
\eta
\frac{z_{i,t}}{\epsilon_{i,t}}
S(r_{i,t},y_{i,t}).
\end{equation}
Therefore,
\begin{equation}
\overline U_{i,t}^{\mathrm{CPS}}
=
\eta\overline S(r_{i,t};q_{i,t}),
\end{equation}
which is uniquely maximized at
$r_{i,t}=q_{i,t}$.

This benchmark shows that APSM is not the only truthful mechanism
under exact influence compensation and universally verifiable
reports.
The pivot component should instead be interpreted as a
Groves-consistent allocation and externality-pricing rule that
explicitly assigns the opportunity cost of occupying a limited
coordination position.

\subsection{Difference Rewards versus Pivot Charges}
\label{app:difference-reward}

A realized leave-one-out reward of the form
\begin{equation}
\Delta_{i,t}^{\mathrm{diff}}
=
W_t(a_t,y_t)
-
W_t(a_t^{(-i)},y_t)
\label{eq:app-difference-reward}
\end{equation}
is generally not a Clarke pivot charge.

First, Equation~\eqref{eq:app-difference-reward} compares realized
total welfare, whereas a pivot charge measures the externality
imposed on the other agents.
Second, the counterfactual $a_t^{(-i)}$ may delete agent $i$ from an
already chosen allocation without re-optimizing the remaining
positions.
Third, the difference reward is outcome-dependent, and therefore
may be sparse or delayed even when the allocation opportunity cost
is immediate.

In general,
\begin{equation}
G_i(a_t)
+
\Delta_{i,t}^{\mathrm{diff}}
\end{equation}
cannot be written as
\begin{equation}
\sum_j V_j(a_t)
-
h_i(r_{-i,t}),
\end{equation}
where $h_i$ is independent of $r_{i,t}$.
The standard Groves truthfulness argument therefore does not apply.

\subsection{Payoff Misspecification under Hard Top-K}
\label{app:payoff-misspecification}

Suppose the true selection benefit is
$G_i^{\mathrm{true}}(a)$, while the mechanism offsets
$\widehat G_i(a)$.
Define the residual benefit
\begin{equation}
B_i(a)
=
G_i^{\mathrm{true}}(a)
-
\widehat G_i(a).
\label{eq:app-residual-benefit}
\end{equation}

\begin{proposition}[Hard-selection non-robustness]
\label{prop:app-misspecification}
Suppose becoming selected increases the residual benefit by
$b_i>0$.
For every finite proper-score weight $\eta$ and every
$\gamma>0$, there exists a deterministic Top-$K$ report profile in
which the truthful report lies within $\gamma$ of the selection
threshold and an upward deviation is profitable.
\end{proposition}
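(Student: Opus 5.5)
I will prove Proposition~\ref{prop:app-misspecification} by an explicit construction in the style of Proposition~\ref{prop:threshold-manipulation}. The idea is to write agent $i$'s expected utility under the misspecified mechanism as the exact APSM Groves form plus the residual benefit $B_i$, and then bound the change in the Groves part for a small upward deviation.

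\textbf{Step 1: the perturbed utility.} Offsetting $\widehat G_i$ while the agent actually receives $G_i^{\mathrm{true}}$ gives, by the same substitution that produced Equation~\eqref{eq:app-groves-form},
\begin{equation*}
\overline U_i(r_i,r_{-i})
=
\Phi_i\bigl(a(r);q_i,r_{-i}\bigr)
-
H_i(r_{-i})
+
\eta\overline S(r_i;q_i)
+
B_i\bigl(a(r)\bigr).
\end{equation*}
The CPS case is identical, except that the $\Phi_i - H_i$ term is dropped. That case is easier, so I treat APSM.

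\textbf{Step 2: the construction.} I use Top-$K$ with positions $d_1\ge\dots\ge d_K$. Fix $q_i\in(0,1)$ and choose $r_{-i}$ so that exactly $K$ other agents report above $q_i$, and the lowest of them, agent $k$ in position $K$, reports $r_k=q_i+\theta$ for some $0<\theta<\gamma$. I take $r_k<1$ and all other selected reports strictly above $r_k$, so that an entry by $i$ only displaces $k$ from the last slot. The truthful report $q_i$ is then unselected and lies within $\gamma$ of the threshold. The deviation is $r_i=q_i+d$ with $d$ slightly larger than $\theta$, chosen so that $d\le\min(\gamma,1-q_i)$ and so that $i$ takes position $K$ and $k$ is dropped.

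\textbf{Step 3: the utility change.} Under this deviation, $\Phi_i$ changes by $d_K(q_i-r_k)=-d_K\theta$. The score term changes by $-\eta c d^2$, which equals $-\eta d^2$ under the normalized Brier score. The residual benefit increases by $b_i$. The net gain is therefore
\begin{equation*}
b_i - d_K\theta - \eta d^2 ,
\end{equation*}
which is positive once $\theta$ and $d$ are taken small enough relative to $b_i$, $d_K$ and $\eta$. Concretely, I would take $\theta=d/2$ and $d<\min\{\gamma,\,1-q_i,\,b_i/(d_K+2\eta)\}$.

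\textbf{Main obstacle.} The delicate step is to make the allocation change exactly the single swap used in Step 3. I need the deviation not to leap over higher-ranked agents, and I need the fixed tie-breaking to be irrelevant. Requiring strict gaps among the other reports and choosing $d>\theta$ strictly handles both. I also have to check that the edge cases $q_i=0$ or $q_i$ near $1$ fit inside $[0,1]$. To avoid this, I simply choose an interior $q_i$, which the existential statement permits.
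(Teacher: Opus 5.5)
Your argument is correct and is essentially the paper's proof: the paper takes the special case $K=1$ with two agents, a competitor reporting $q+d/2$, and the deviation $q+d$, obtaining the same net gain $b_i-d/2-\eta d^2$ (your $b_i-d_K\theta-\eta d^2$ with $d_K=1$, $\theta=d/2$) and then choosing $0<d<\gamma$ small. For $K>1$, the one thing to tighten is that the other $K-1$ selected agents must report strictly above $q_i+d$, not merely above $r_k$ with strict gaps (e.g., let them report $1$, which works since $d<1-q_i$), so that the deviation is really the single swap used in Step~3.
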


\begin{proof}
It suffices to construct a Top-$1$ instance with two agents.
Let agent $i$ have private belief $q_i=q$, and let the competing
agent report
\begin{equation}
r_j=q+\frac{d}{2},
\end{equation}
for some $d>0$.

When agent $i$ reports truthfully, agent $j$ is selected.
If agent $i$ reports $q+d$, it becomes selected.
Relative to truthful allocation, the resulting loss in the
belief-space allocation objective is $d/2$.
The proper-score loss under the normalized Brier score is
$\eta d^2$, while the residual selection benefit is $b_i$.
The total utility gain from deviating is therefore
\begin{equation}
b_i
-
\frac{d}{2}
-
\eta d^2.
\label{eq:app-misspecification-gain}
\end{equation}

For every $b_i>0$, one can choose
\begin{equation}
0<d<\gamma
\end{equation}
sufficiently small that
\begin{equation}
b_i
>
\frac{d}{2}
+
\eta d^2.
\end{equation}
The deviation is then profitable.
\end{proof}

This result rules out a uniform approximate-truthfulness guarantee
for deterministic Top-$K$ allocation under arbitrary residual
selection benefits.
The main APSM theorem therefore assumes platform-controlled
selection benefits that can be accurately represented in the
transfer.

\subsection{Noisy Payoff Calculation}
\label{app:noisy-calculator}

Suppose the local executor uses an estimated utility
$\widehat{\mathcal U}_{i,t}(r)$ satisfying
\begin{equation}
\left|
\widehat{\mathcal U}_{i,t}(r)
-
\mathcal U_{i,t}(r)
\right|
\le
\xi_{i,t},
\qquad
\forall r\in\mathcal R.
\label{eq:app-uniform-utility-error}
\end{equation}

Let
\begin{equation}
r_{i,t}^{\star}
\in
\arg\max_{r\in\mathcal R}
\mathcal U_{i,t}(r),
\end{equation}
and
\begin{equation}
\widehat r_{i,t}
\in
\arg\max_{r\in\mathcal R}
\widehat{\mathcal U}_{i,t}(r).
\end{equation}

Then
\begin{equation}
\begin{aligned}
\mathcal U_{i,t}(r_{i,t}^{\star})
&\le
\widehat{\mathcal U}_{i,t}(r_{i,t}^{\star})
+\xi_{i,t} \\
&\le
\widehat{\mathcal U}_{i,t}(\widehat r_{i,t})
+\xi_{i,t} \\
&\le
\mathcal U_{i,t}(\widehat r_{i,t})
+
2\xi_{i,t}.
\end{aligned}
\end{equation}
Thus the executor is at most $2\xi_{i,t}$-suboptimal.
Corollary~\ref{cor:executor-error} gives
\begin{equation}
\left|
\widehat r_{i,t}
-
q_{i,t}
\right|
\le
\sqrt{
\frac{2\xi_{i,t}}
{\eta c_S}
}.
\label{eq:app-noisy-calculator-bound}
\end{equation}

\subsection{Finite Report Grids and Tie-Breaking}
\label{app:finite-grid}

Let $\mathcal R_h\subset[0,1]$ be a finite report grid with maximum
spacing $h$.
If $q_i\in\mathcal R_h$, Theorem~\ref{thm:APSM-truthfulness}
continues to imply that $q_i$ is the unique grid-optimal report.

If $q_i\notin\mathcal R_h$, let
\begin{equation}
\Pi_{\mathcal R_h}(q_i)
\in
\arg\min_{r\in\mathcal R_h}
|r-q_i|
\end{equation}
denote a nearest grid point.
Suppose that reporting $\Pi_{\mathcal R_h}(q_i)$ induces an
allocation maximizing
$\Phi_i(a;q_i,r_{-i})$.
Under the Brier score, every grid-optimal report then satisfies
\begin{equation}
\left|
r_i^{\star}-q_i
\right|
\le
\frac{h}{2}.
\label{eq:app-grid-error}
\end{equation}

If no nearest grid point preserves a belief-efficient allocation,
the grid can introduce an additional allocation-discretization
error.
Our implementation therefore reports both the report-grid spacing
and the truthful-allocation match rate.

For Pivot-only, multiple grid reports may induce the same efficient
allocation and therefore remain utility-equivalent.
We use a fixed, mechanism-independent tie-breaking rule when
reporting Pivot-only behavior.
The tie-breaking rule is never chosen based on distance to $q_i$.

\subsection{Surrogate-Welfare Approximation}
\label{app:surrogate-welfare}

The mechanism optimizes the additive expected-DCG surrogate
$\widetilde W_t$, while the downstream benchmark reports metrics
such as NDCG@5.

Let $W_t(a)$ denote a downstream allocation utility, and suppose
that
\begin{equation}
\left|
W_t(a)-\widetilde W_t(a)
\right|
\le
\zeta_t,
\qquad
\forall a\in\mathcal A_t.
\label{eq:app-surrogate-error}
\end{equation}
Let
\begin{equation}
a_t^{\star}
\in
\arg\max_{a\in\mathcal A_t}
W_t(a),
\qquad
\widetilde a_t
\in
\arg\max_{a\in\mathcal A_t}
\widetilde W_t(a).
\end{equation}
Then
\begin{equation}
W_t(a_t^{\star})
-
W_t(\widetilde a_t)
\le
2\zeta_t.
\label{eq:app-surrogate-regret}
\end{equation}

Indeed,
\begin{equation}
\begin{aligned}
W_t(a_t^{\star})
-
W_t(\widetilde a_t)
&\le
W_t(a_t^{\star})
-
\widetilde W_t(a_t^{\star}) \\
&\quad+
\widetilde W_t(\widetilde a_t)
-
W_t(\widetilde a_t) \\
&\le
2\zeta_t,
\end{aligned}
\end{equation}
where we use the optimality of $\widetilde a_t$ for the surrogate
objective.

\section{Efficient Pivot Computation}
\label{app:efficient-pivot}

For the Top-$K$ position-allocation instance, all pivot charges can
be computed after a single sorting operation.

Let
\begin{equation}
r_{(1)}
\ge
r_{(2)}
\ge
\dots
\ge
r_{(N)}
\end{equation}
denote reports sorted according to the platform's fixed
tie-breaking rule.
The top $K$ agents receive position values
$d_1,\dots,d_K$.
Define
\begin{equation}
d_{K+1}=0.
\end{equation}

For the agent placed at rank $k\le K$, removing the agent shifts
the candidates at ranks $k+1,\dots,K+1$ upward.
Its pivot charge is
\begin{equation}
p_{(k)}^{\mathrm{pivot}}
=
\sum_{\ell=k}^{K}
\left(
d_{\ell}-d_{\ell+1}
\right)
r_{(\ell+1)}.
\label{eq:app-topk-pivot}
\end{equation}
If fewer than $K+1$ agents are available, missing outside reports
are defined as zero.
Unselected agents impose no allocation externality and receive
\begin{equation}
p_{(k)}^{\mathrm{pivot}}=0,
\qquad
k>K.
\end{equation}

Equation~\eqref{eq:app-topk-pivot} can be evaluated for all selected
agents using suffix sums after sorting.
The overall computational complexity per round is therefore
\begin{equation}
O(N\log N+K),
\end{equation}
and the additional memory cost is $O(N)$.

\section{Experimental Protocol and Additional Results}
\label{app:experiments}

This appendix provides the complete protocol for the controlled
mechanism replay and reports analyses omitted from the main paper for
space. We maintain the same distinction as the main text: $q_i$ is a
frozen agent-local reference belief formed before mechanism-dependent
reporting, whereas $r_i$ is the report submitted to the platform. The
reference belief is used only inside the agent-local policy executor and
for evaluation; it is never exposed to the platform allocation or
transfer modules.

\subsection{Experimental Protocol}
\label{app:protocol}

\paragraph{Paired controlled replay.}
We separate candidate generation and belief formation from strategic
reporting. For every query or database incident, we first freeze the
candidate content, supporting evidence, realized outcomes, agent-local
reference beliefs, admissible report domain, allocation tie-breaking,
and random seeds. All mechanisms start from the same cached initial
profile and use the same agent-update order and sweep schedule. Within a
mechanism, reports may subsequently differ because the public utility
function differs. Consequently, paired differences are not confounded by
changes in generated content, task difficulty, initialization, or audit
randomness.

\paragraph{AgentWebBench content allocation.}
The content-allocation environment contains six agents per query: two Qwen3-8B agents, two Qwen3-4B agents, and two DeepSeek V4 pro agents competing in the same market.
The platform selects $K=2$ content agents using the
additive surrogate defined in the main paper. Each selected
agent contributes up to three documents. The documents are
ordered according to the selected agents' reports, merged,
deduplicated, and truncated to the top five for
document-level NDCG@5 evaluation. 

\paragraph{Database root-cause diagnosis.}
The diagnosis environment is constructed from the Database Root-Cause
Diagnosis task of MultiAgentBench. It contains 50 independently
initialized PostgreSQL incidents covering ten single- and dual-root-cause
configurations, with five incidents per configuration. Six specialist
agents execute hypothesis-specific SQL diagnostics and exchange findings
over two communication rounds. Each incident is repeated under three
paired rotations of the specialist--model assignment, using two Qwen3-8B agents, two Qwen3-4B agents, and two DeepSeek V4 pro agents. All mechanisms use the same Top-$2$
router and the same frozen diagnoses, evidence, reference beliefs,
outcomes, initialization, and tie-breaking.

\paragraph{Public mechanism configuration.}
All compared mechanisms use the same public allocation values $x_i(a)$,
the same platform-controlled selection-benefit schedule $G_i(a)$, and
the same normalized Brier score
\begin{equation}
S(r,y)=1-(r-y)^2.
\label{eq:app_brier_score}
\end{equation}
We set the proper-score weight to $\eta=1$ and use a default audit
probability of $\epsilon=0.3$ unless otherwise stated. The ranked
position values and selection-benefit schedule are fixed before reports
are submitted and are reused without modification across mechanisms.
The main controlled replay maximizes over a cached finite admissible
report domain; the separate finite-grid ablation replaces that domain
with a uniform grid of spacing $0.05$. Therefore, ``exact execution'' in
an experiment means exact maximization over the stated admissible domain,
not necessarily over the continuum $[0,1]$. Exact recovery of $q_i$
requires that $q_i$ be admissible. Allocation ties are resolved by smaller agent index.
Utility ties are resolved by selecting the smallest report among
maximizing candidates (a fixed, mechanism-independent rule).
For Pivot-only, we retain and evaluate the complete maximizing set
rather than forcing a unique report.

\paragraph{Mechanism-agnostic agent-local executor.}
For the controlled replay, the executor runs inside the agent-local trust
boundary. Given private reference belief $q_i$, the current public
reports $r_{-i}$, and an admissible report domain $\mathcal R_i$, it
evaluates each candidate report by
\begin{equation}
\overline U_i(r;r_{-i},q_i)
=
\mathbb E_{Y\sim\mathrm{Bernoulli}(q_i),Z}
\left[U_i(r,r_{-i},Y,Z)\right]
\label{eq:app_expected_utility}
\end{equation}
and submits
\begin{equation}
\widehat r_i
\in
\arg\max_{r\in\mathcal R_i}
\overline U_i(r;r_{-i},q_i).
\label{eq:app_executor}
\end{equation}
For every candidate report, the executor recomputes the induced ranked
allocation, evaluates the mechanism-specific transfer, and calculates
the agent's expected utility. The same code path, report domain,
initialization, update order, and tie-breaking are used for every
mechanism. In particular, the executor contains no rule that directly
sets $r_i=q_i$. Only the selected report $\widehat r_i$ is released to
the platform.

Each agent optimizes independently against the frozen reference beliefs
$q_{-i}$ of all other agents (unilateral, simultaneous best response).
The optimal reports are computed independently and then combined into a
joint profile for Top-$K$ allocation. No agent observes another agent's
mechanism-specific report before submitting its own. The same
initialization is used in every paired run; only the utility function
differs across mechanisms.


\paragraph{Calculator-free LLM reporting.}
The calculator-free experiment in RQ1 does not use
Equation~\eqref{eq:app_executor}. We freeze each agent's candidate,
evidence, and initial confidence, and ask the same LLM to submit a final
confidence under three public conditions: neutral re-report,
Selection-only, and Matched-Brier. The LLM retains control of the final
reported probability in every condition. The neutral re-report condition
captures ordinary variation caused by asking for confidence a second
time; incentive-induced shifts are therefore measured relative to this
paired control. Reports that cannot be parsed as a probability in
$[0,1]$ are counted as parse errors and are not silently repaired. The
eligible denominator for the threshold-crossing rate is reported in the
main-paper table caption.

\paragraph{Tool-assisted LLM reporting.}
For each frozen instance, the LLM receives the candidate,
supporting evidence, reference confidence, competing reports,
admissible report domain, and public mechanism description.
It may invoke an agent-local utility tool that returns the
expected utility of every admissible report. The tool neither
marks a report as optimal nor submits a report. The LLM
independently produces its final confidence after inspecting
the returned table.

Matched-Brier, CPS, and APSM use identical instances,
model--agent assignments, report domains, prompt structure,
tool access, parsing rules, and query-level pairing. Only the
mechanism-specific utility calculation differs. Tool invocation
succeeds in $100\%$ of runs for all three mechanisms.

\paragraph{Compared mechanisms.}
All mechanisms use the same frozen candidates, initial report profile,
allocation rule, admissible report domain, update schedule, and audit
random-number stream.
\emph{Selection-only} provides only the platform-controlled allocation
benefit $G_i(a(r))$.
\emph{Matched-Brier} adds the same sparsely audited proper score but does
not compensate the allocation benefit.
\emph{Difference Reward} (LOO in the appendix figures) uses the realized
leave-one-out contribution reward implemented in the evaluation code;
unlike the Clarke charge, this baseline is outcome-based rather than a
reported-welfare externality payment.
\emph{Compensated Proper Scoring} (CPS) offsets $G_i$ and applies the same
proper score,
\begin{equation}
T_i^{\mathrm{CPS}}
=
-G_i(a(r))
+
\eta\frac{z_i}{\epsilon_i}S(r_i,y_i).
\label{eq:app_cps}
\end{equation}
\emph{Pivot-only} (Clarke in the appendix figures) uses the complete APSM
transfer with $\eta=0$, retaining allocation-benefit compensation,
audited realized value, and the Clarke pivot charge while omitting strict
probability identification.
\emph{APSM} uses the complete transfer from the main paper.
The \emph{reference-belief oracle} ranks candidates directly using $q$
and is used only as a non-deployable evaluation reference.


\subsection{Metrics and Statistical Analysis}
\label{app:metrics}

\paragraph{Reporting fidelity.}
The primary reporting metric is mean absolute deviation from the frozen
agent-local reference belief,
\begin{equation}
\mathrm{RefDev}
=
\frac{1}{|\mathcal D|N}
\sum_{t\in\mathcal D}\sum_{i=1}^{N}
|r_{i,t}-q_{i,t}|.
\label{eq:app_refdev}
\end{equation}
For calculator-free reporting, the signed incentive shift is measured
relative to the paired neutral re-report condition. We additionally
report the threshold-crossing rate among agents whose frozen reference
report is below the Top-$K$ cutoff, and the exact reference-report rate
when appropriate.

\paragraph{Probability and allocation quality.}
All-candidate probability quality is measured over the frozen candidate
set,
\begin{equation}
\mathrm{Brier}
=
\frac{1}{|\mathcal D|N}
\sum_{t\in\mathcal D}\sum_{i=1}^{N}
(r_{i,t}-y_{i,t})^2.
\label{eq:app_brier_metric}
\end{equation}
Let $a_t(r_t)$ and $a_t(q_t)$ be the ranked allocations induced by reports
and reference beliefs. Exact reference-allocation match is
\begin{equation}
\mathrm{Match}
=
\frac{1}{|\mathcal D|}
\sum_{t\in\mathcal D}
\mathbb I\!\left[a_t(r_t)=a_t(q_t)\right].
\label{eq:app_match}
\end{equation}
AgentWebBench additionally reports NDCG@5. Database diagnosis reports
Top-$1$ success, the benchmark-defined Top-$2$ success criterion, and mean
reciprocal rank (MRR).

\paragraph{Allocation-changing utility margin.}
For comparisons between CPS and APSM, define the set of admissible reports
that change the reference-induced allocation as
\begin{equation}
\mathcal A_i^{\mathrm{chg}}
=
\left\{r_i\in\mathcal R_i:
 a(r_i,r_{-i})\neq a(q_i,r_{-i})\right\}.
\label{eq:app_changing_set}
\end{equation}
For agent--instance pairs with $\mathcal A_i^{\mathrm{chg}}\neq\emptyset$,
the allocation-changing margin is
\begin{equation}
M_i^{\mathrm{alloc}}
=
\overline U_i(q_i,r_{-i})
-
\max_{r_i\in\mathcal A_i^{\mathrm{chg}}}
\overline U_i(r_i,r_{-i}).
\label{eq:app_allocation_margin}
\end{equation}
A larger value indicates that an approximate executor must make a larger
utility error before selecting an allocation-changing report. Pairs with
no admissible allocation-changing report are not included in this
diagnostic because the constrained maximum is undefined.


\paragraph{Paired uncertainty estimates.}
For the calculator-free and tool-assisted RQ1 experiments,
confidence intervals are computed by block bootstrapping the
200 queries. Controlled AgentWebBench results
use 10,000 seed--query hierarchical bootstrap draws: seeds are resampled
first, followed by queries within each selected seed. For database
diagnosis, the 50 independently initialized incidents are treated as
statistical clusters. Paired differences are first averaged over the
three specialist--model rotations and then resampled with 5,000
incident-cluster bootstrap draws. Agent-level observations from the same
query or incident are never treated as independent samples. Error bars
and shaded regions denote 95\% confidence intervals unless otherwise
specified.

\subsection{Complete AgentWebBench Results}
\label{app:agentweb_complete}

\begin{table*}[t]
    \centering
    \small
    \setlength{\tabcolsep}{5pt}
    \caption{Complete AgentWebBench results over 200 queries and three paired
    seeds. The population contains two Qwen3-8B, two Qwen3-4B, and two
    DeepSeek agents. Tail failure is the fraction of queries with no relevant
    selected candidate.}
    \label{tab:app_agentweb_complete}
    \begin{tabular}{lcccccc}
        \hline
        Mechanism & $|r-q|\downarrow$ & Brier$\downarrow$ & NDCG@5$\uparrow$
        & Sel. eff.$\uparrow$ & Tail fail.$\downarrow$ & Exact match$\uparrow$ \\
        \hline
        Selection & 0.361 & 0.781 & 0.274 & 0.758 & 0.650 & 0.022 \\
        Matched-Brier & 0.354 & 0.770 & 0.328 & 0.800 & 0.600 & 0.000 \\
        Pivot-only & 0.059 & 0.472 & 0.379 & 0.880 & 0.520 & 0.702 \\
        APSM & \textbf{0.001} & \textbf{0.415} & \textbf{0.452}
        & \textbf{0.938} & \textbf{0.462} & \textbf{0.992} \\
        \hline
    \end{tabular}
\end{table*}
Table~\ref{tab:app_agentweb_complete} reports the complete
AgentWebBench results. Matched-Brier retains substantial
strategic deviation ($0.354$) and never reproduces the
reference-induced allocation. Pivot-only reduces the deviation
to $0.059$ and improves allocation match to $0.702$, but does
not uniquely identify the reference probability. APSM reduces
the deviation to $0.001$, achieves a $0.992$ exact allocation
match, and obtains the highest NDCG@5 ($0.452$). The same
ordering is observed for selection efficiency and tail failure.

\begin{figure*}[t]
    \centering
    \includegraphics[width=0.98\textwidth]{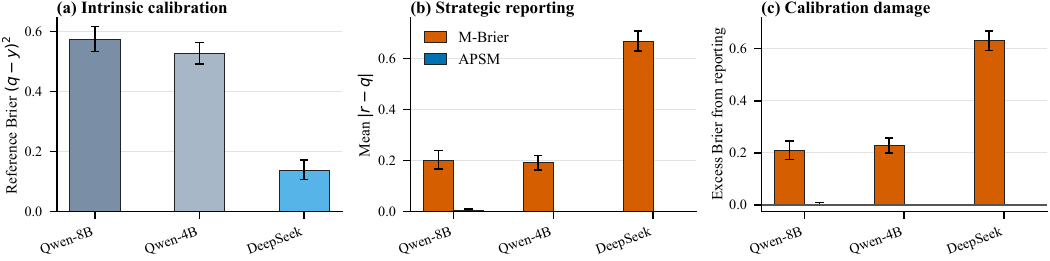}
    \caption{Model-level decomposition in the mixed 2--2--2 AgentWebBench
    population. \textbf{(a)} Intrinsic Brier loss of the frozen reference
    beliefs. \textbf{(b)} Mean absolute reporting deviation under
    Matched-Brier and APSM. \textbf{(c)} Excess Brier loss introduced by
    strategic reporting. Error bars are 95\% query-level bootstrap confidence
    intervals over 200 unique queries.}
    \label{fig:app_model_level}
\end{figure*}

Figure~\ref{fig:app_model_level} decomposes the results across
the mixed 2--2--2 population. DeepSeek produces the lowest
intrinsic Brier loss ($0.138$), compared with $0.575$ for
Qwen3-8B and $0.528$ for Qwen3-4B. Nevertheless, under
Matched-Brier, DeepSeek exhibits the largest reporting
deviation ($0.668$), compared with $0.202$ and $0.191$ for
the two Qwen families. Its reporting-induced Brier increase is
also substantially larger.

APSM reduces reporting deviation to $0.004$ for Qwen3-8B
and approximately zero for Qwen3-4B and DeepSeek, while
nearly eliminating the corresponding calibration damage.
This result shows that better passive calibration does not by
itself imply incentive-compatible elicitation. It should not be
interpreted as evidence that one model family is inherently
more strategic, since the observed deviation also depends on
the reference belief's distance from the Top-$K$ threshold.

\subsection{Collaborative Calibration Followed by Strategic Reporting}
\label{app:cc}

Collaborative Calibration (CC) acts on belief formation, whereas APSM acts
on the subsequent reporting layer. We therefore first freeze the
original and CC-revised belief profiles and then apply different reporting
mechanisms to the same CC reference beliefs.

\begin{figure*}[t]
    \centering
    \includegraphics[width=0.98\textwidth]{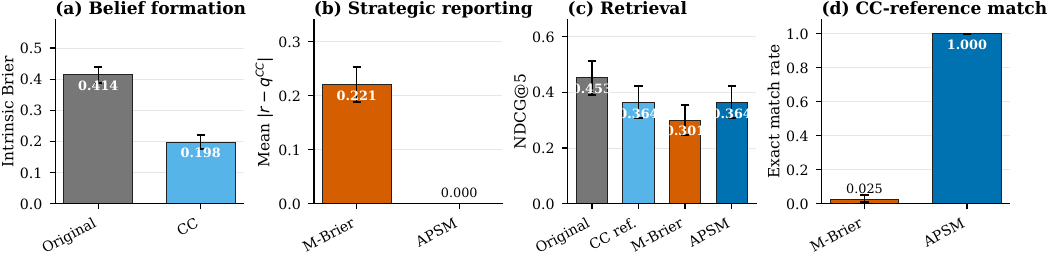}
    \caption{Interaction between Collaborative Calibration (CC) and
    mechanism-dependent reporting. \textbf{(a)} Intrinsic Brier loss of
    the original and CC-revised reference beliefs. \textbf{(b)} Mean
    absolute deviation from the frozen CC reference beliefs after
    Matched-Brier or APSM reporting. \textbf{(c)} NDCG@5 induced by the
    original reference, the CC reference, and the two reports formed from
    the CC reference. \textbf{(d)} Exact match to the allocation induced
    by the CC reference beliefs. Results use 200 queries, three paired
    seeds, and six agents per query.}
    \label{fig:app_cc}
\end{figure*}

As shown in Figure~\ref{fig:app_cc}(a), CC reduces intrinsic
Brier loss from $0.414$ to $0.198$, confirming that
collaboration can improve the marginal probability quality of
the beliefs supplied to the mechanism. The induced ranking is
not necessarily improved: NDCG@5 decreases from $0.453$ for
the original reference profile to $0.364$ for the CC reference
profile. This difference reflects the distinction between
marginal calibration and cross-agent resolution under a scarce
ranked allocation.

Holding the CC beliefs fixed, Matched-Brier introduces a mean
absolute reporting deviation of $0.221$, decreases NDCG@5 to
$0.301$, and reproduces the CC-reference allocation in only
$2.5\%$ of queries. APSM returns the CC reference reports in
the controlled replay, obtains zero reference deviation,
restores NDCG@5 to $0.364$, and achieves an exact
allocation-match rate of $1.000$. APSM therefore does not
replace collaborative calibration; it prevents the downstream
selection incentive from introducing an additional
$q_i^{\mathrm{CC}}\!\rightarrow r_i$ distortion.

\subsection{Complete Database-Diagnosis Comparison}
\label{app:database_complete}

\begin{figure*}[t]
    \centering
    \includegraphics[width=0.98\textwidth]{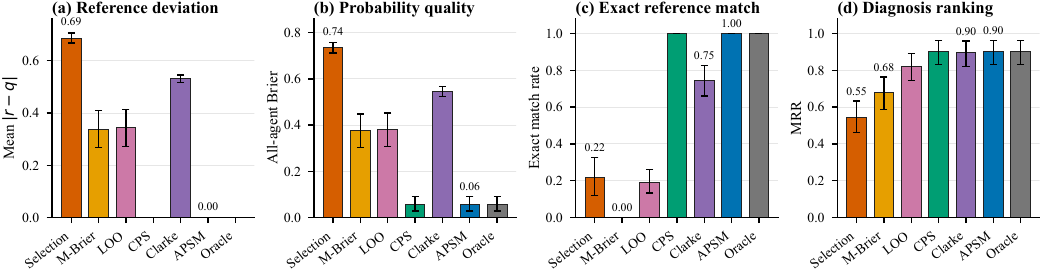}
    \caption{Complete mechanism comparison on MultiAgentBench database
    root-cause diagnosis. \textbf{(a)} Mean absolute reference
    deviation. \textbf{(b)} All-agent Brier loss. \textbf{(c)} Exact
    reference-allocation match. \textbf{(d)} Diagnosis ranking measured
    by MRR. LOO denotes Difference Reward, and Clarke denotes Pivot-only.
    CPS and APSM share exact influence compensation and the same strictly
    proper score; the oracle directly allocates using the frozen reference
    beliefs. Error bars use paired incident-cluster bootstrap confidence
    intervals.}
    \label{fig:app_database_complete}
\end{figure*}

Figure~\ref{fig:app_database_complete} extends the main-paper table by
including CPS and the reference-belief oracle. Selection-only has the
largest distortion (approximately $0.69$) and the weakest diagnosis
ranking (MRR $0.55$). Matched-Brier and Difference Reward reduce the
error but retain substantial reference deviation. Pivot-only reaches an
MRR of approximately $0.90$ and an exact allocation-match rate of
approximately $0.75$, but its mean reference deviation remains
approximately $0.53$ because it identifies an efficient allocation
rather than a unique probability report.

Under exact maximization over the evaluated admissible domain, CPS and
APSM both obtain zero reference deviation, match the reference-induced
Top-$2$ allocation on every incident, and reach an MRR of approximately
$0.90$, equal to the reference-belief oracle at the displayed precision.
This confirms that exact compensation plus a strictly proper score is
already sufficient for exact probability elicitation whenever the
reference report is admissible. The additional audited Groves allocation
component in APSM is evaluated through utility margins and approximate
execution rather than through a claim that CPS is non-truthful under
exact optimization.

\begin{table}[t]
    \centering
    \small
    \setlength{\tabcolsep}{3.3pt}
    \caption{Database diagnosis results separated into single- and dual-root
    incidents. Each value averages three paired specialist--model rotations.}
    \label{tab:app_database_breakdown}
    \begin{tabular}{clccccc}
        \hline
        Roots & Mechanism & $|r-q|$ & Brier & Top-1 & MRR & Match \\
        \hline
        1 & M-Brier & 0.127 & 0.194 & 0.653 & 0.747 & 0.000 \\
        1 & APSM & \textbf{0.000} & \textbf{0.104} & \textbf{0.707}
          & \textbf{0.806} & \textbf{1.000} \\
        2 & M-Brier & 0.551 & 0.563 & 0.480 & 0.609 & 0.000 \\
        2 & APSM & \textbf{0.000} & \textbf{0.010} & \textbf{1.000}
          & \textbf{1.000} & \textbf{1.000} \\
        \hline
    \end{tabular}
\end{table}
Table~\ref{tab:app_database_breakdown} separates single- and
dual-root-cause incidents. The gap is largest for dual-root
incidents, where Matched-Brier produces a reference deviation
of $0.551$ and an MRR of $0.609$. APSM eliminates the
reporting deviation and reaches an MRR of $1.000$. For
single-root incidents, APSM likewise improves MRR from
$0.747$ to $0.806$ while exactly preserving the
reference-induced allocation.

\subsection{Mechanism Decomposition}
\label{app:decomposition}

\begin{figure*}[t]
    \centering
    \includegraphics[width=0.96\textwidth]{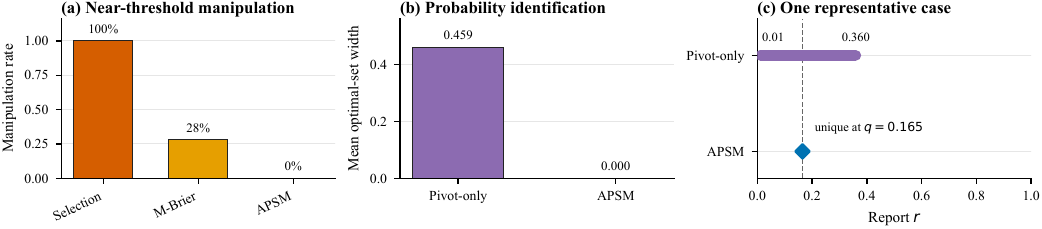}
   \caption{Mechanism decomposition. \textbf{(a)} Near-threshold
manipulation rate. \textbf{(b)} Mean optimal-report-set width.
\textbf{(c)} Pivot-only admits $[0.01,0.360]$, whereas APSM
uniquely identifies $q=0.165$.}
    \label{fig:app_decomposition}
\end{figure*}

For the near-threshold diagnostic, we hold each controlled instance fixed
and identify agents whose reference report lies below a deterministic
Top-$K$ cutoff but whose admissible report domain contains a cutoff-crossing
alternative. An instance is counted as manipulated when the
utility-maximizing report crosses the cutoff. The manipulation rate is the
fraction of these eligible instances that are manipulated.

Figure~\ref{fig:app_decomposition}(a) isolates the discontinuity created
by deterministic Top-$K$ allocation. Selection-only produces a $100\%$
manipulation rate. A scale-matched Brier score lowers this rate to $28\%$,
but cannot eliminate profitable crossings because the proper-score loss
is locally quadratic while the selection gain is discrete. APSM obtains
a $0\%$ manipulation rate in the same controlled instances when the
platform-controlled benefit is correctly compensated.

Panels (b)--(c) isolate the role of strict probability
scoring. The audited Groves allocation term in Pivot-only
makes an efficient allocation a weak best response, but every
report inducing the same efficient allocation may remain
optimal. Across the controlled instances, the mean width of
this optimal-report set is $0.459$. In the representative case
shown in panel (c), Pivot-only admits all reports in
$[0.01,0.360]$, whereas the strictly proper term in APSM
uniquely identifies $q=0.165$. Thus, the allocation component
determines which allocation is desirable, whereas the proper
score uniquely identifies the probability report.

\subsection{Sparse-Audit Verification--Variance Trade-off}
\label{app:sparse_audit}

To isolate settlement variability from changes in strategic
behavior, we use paired audit replay. Reports and allocations
are first selected by maximizing expected utility and then
held fixed. We replay matched audit draws under
\begin{equation}
\epsilon\in\{0.05,0.1,0.2,0.3,0.5,0.7,1.0\}
\label{eq:app_audit_rates}
\end{equation}
and recompute the corresponding inverse-propensity
transfers. This protocol isolates the effect of audit sparsity
without assuming that an unaided LLM relearns its reporting
policy at every audit rate.

For each rate, we report agent-level transfer variance, the
mean per-instance maximum absolute transfer, and the fraction
of instances in which no agent is audited.

\begin{figure*}[t]
    \centering
    \includegraphics[width=0.98\textwidth]{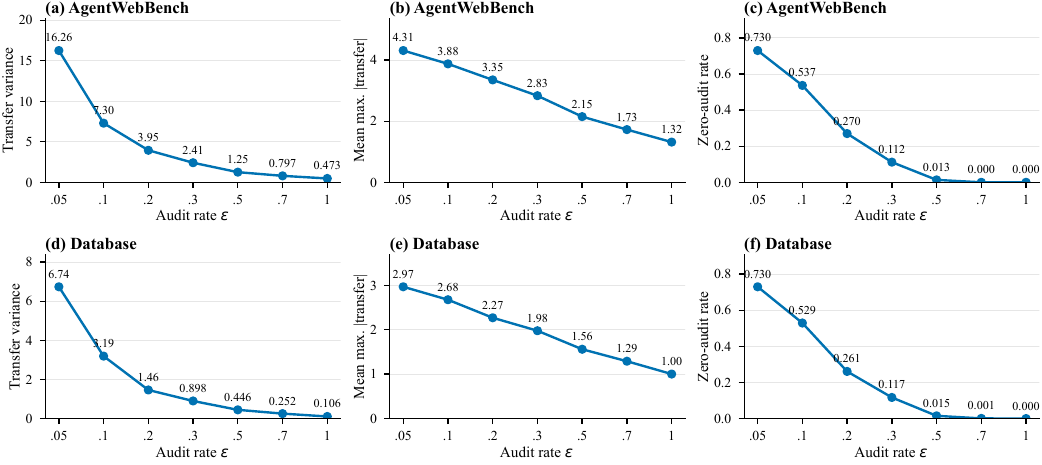}
    \caption{Sparse-audit trade-off on AgentWebBench and
    database diagnosis for seven audit rates.
    \textbf{(a,d)} Variance of realized agent transfers.
    \textbf{(b,e)} Mean per-instance maximum absolute
    transfer. \textbf{(c,f)} Fraction of instances containing
    no audited agent. Reports, allocations, and matched
    random-number streams are fixed across audit rates.}
    \label{fig:app_sparse_audit}
\end{figure*}

Figure~\ref{fig:app_sparse_audit} shows the same smooth trend
in both environments. As auditing becomes sparser, transfer
variance and payment tails increase sharply because
inverse-propensity weighting amplifies each audited outcome.
The zero-audit rate also rises rapidly at low audit
probabilities, whereas it is nearly eliminated once the audit
rate reaches approximately $0.7$. The curves therefore reveal a clear verification--variance
trade-off. Moderate audit sparsity reduces verification cost
while retaining relatively stable settlement, but aggressively
sparse auditing produces high-variance and occasionally
uninformative rounds. This behavior is consistent with the
theoretical $O(1/\epsilon)$ dependence and highlights that
unbiased expected incentives do not guarantee low realized
payment risk.

\subsection{Sensitivity to Selection-Benefit Misspecification}
\label{app:misspecification}

The truthfulness result assumes that the platform correctly models the
allocation benefit it controls. We test a multiplicative mismatch by
letting the agent's true benefit be
\begin{equation}
G_i^{\mathrm{true}}(a)
=
c\,\widehat G_i(a),
\label{eq:app_misspec}
\end{equation}
while the mechanism continues to compensate only $\widehat G_i(a)$. We
sweep
\begin{equation}
c\in\{0.90,0.95,0.98,1.00,1.02,1.05,1.10\}.
\label{eq:app_misspec_sweep}
\end{equation}
All candidates, beliefs, allocation rules, report domains, proper-score
weights, initial profiles, and paired executor randomness remain fixed.

\begin{figure*}[t]
    \centering
    \includegraphics[width=0.98\textwidth]
    {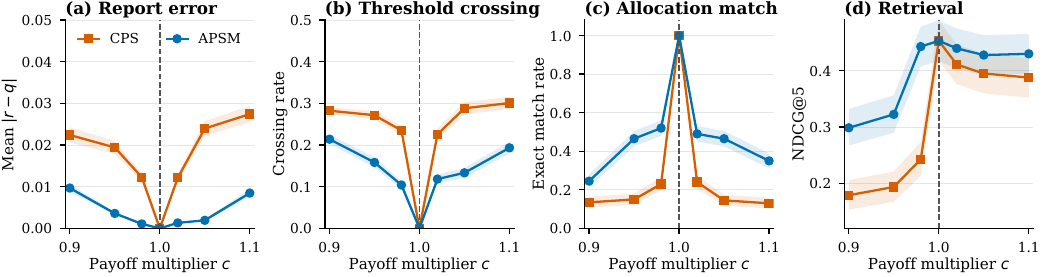}
    \caption{Sensitivity to misspecification of the platform-controlled
    allocation benefit. The agent's true benefit is $c\widehat G_i$,
    whereas CPS and APSM compensate $\widehat G_i$. \textbf{(a)} Mean
    absolute report error. \textbf{(b)} Fraction of reports that cross
    the reference-induced allocation threshold. \textbf{(c)} Exact
    reference-allocation match. \textbf{(d)} NDCG@5. The vertical dashed
    line marks exact specification ($c=1$). Shaded regions denote 95\%
    paired confidence intervals.}
    \label{fig:app_misspecification}
\end{figure*}

At $c=1$, CPS and APSM both recover the reference report and allocation
under exact execution. Any mismatch leaves a residual selection benefit
and sharply reduces exact allocation recovery near hard Top-$K$
thresholds. APSM nevertheless degrades more slowly than CPS because an
allocation-changing report must overcome both the proper-score loss and
the belief-space allocation loss in the audited Groves term. At a
representative $\pm5\%$ mismatch, APSM retains a $46.5\%$ exact
reference-allocation match rate, compared with approximately $15\%$ for
CPS. APSM also produces lower report error and fewer threshold crossings
throughout the displayed sweep, while maintaining higher NDCG@5 away
from exact compensation.

This experiment demonstrates relative resilience rather than an absolute
robustness guarantee. Both mechanisms decline substantially from their
$100\%$ exact match at $c=1$, consistent with the sensitivity of
deterministic Top-$K$ allocation to residual selection benefits.

\subsection{Approximate Agent-Local Execution}
\label{app:approx_execution}

We evaluate six controlled executor conditions. \emph{Grid} performs
exact utility maximization on a uniform report grid with spacing $0.05$.
\emph{N .01} and \emph{N .05} perturb candidate-report utilities with
paired additive-noise realizations of magnitude $0.01$ and $0.05$,
respectively. \emph{MC10}, \emph{MC50}, and \emph{MC200} estimate
expected utility using 10, 50, and 200 Monte Carlo samples of the random
variables appearing in the mechanism utility. CPS and APSM reuse the same
perturbation realizations or Monte Carlo random-number streams in each
paired instance.

For a submitted report $\widehat r_i$, define its expected-utility
suboptimality relative to the truthful optimum as
\begin{equation}
\delta_i
=
\overline U_i(q_i,r_{-i})
-
\overline U_i(\widehat r_i,r_{-i}).
\label{eq:app_delta}
\end{equation}
For APSM, Corollary~1 implies
\begin{equation}
|\widehat r_i-q_i|
\le
\sqrt{\frac{\delta_i}{\eta c_S}}.
\label{eq:app_execution_bound}
\end{equation}
The same bound holds for CPS under exact compensation because, with the
normalized Brier score,
$\overline U_i^{\mathrm{CPS}}(q_i)-
\overline U_i^{\mathrm{CPS}}(r_i)=\eta(r_i-q_i)^2$.
With $\eta=1$ and $c_S=1$, the displayed bound becomes
$\sqrt{\delta_i}$.

\begin{figure*}[t]
    \centering
    \includegraphics[width=0.98\textwidth]
    {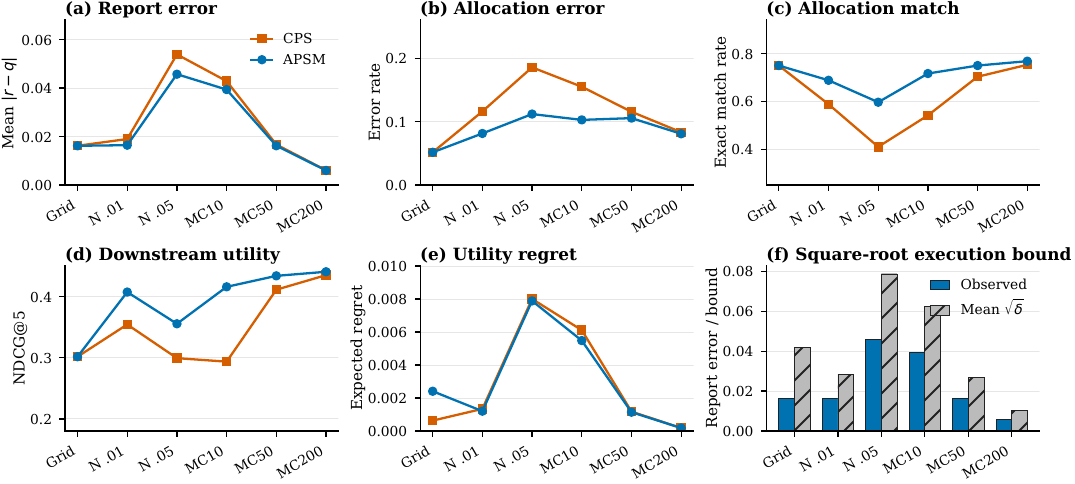}
    \caption{CPS and APSM under approximate agent-local execution.
    \textbf{(a)} Mean absolute report error. \textbf{(b)} Fraction of
    reports that change the reference-induced allocation. \textbf{(c)}
    Exact reference-allocation match. \textbf{(d)} Downstream NDCG@5.
    \textbf{(e)} Expected-utility suboptimality $\delta$. \textbf{(f)}
    Observed APSM report error compared with the mean square-root
    execution bound from Equation~\eqref{eq:app_execution_bound}.
    Grid uses report spacing $0.05$; N .01 and N .05 use additive
    utility-noise magnitudes $0.01$ and $0.05$; MC10, MC50, and MC200
    use 10, 50, and 200 utility samples. All paired conditions reuse
    the same frozen candidates, beliefs, allocation rule, perturbation
    realizations, and Monte Carlo streams.}
    \label{fig:app_executor_approximation}
\end{figure*}

Report error alone is similar for CPS and APSM in several conditions,
but their allocation consequences differ. Under utility noise of $0.05$,
CPS changes the reference-induced allocation in $18.6\%$ of instances,
whereas APSM reduces this rate to $11.2\%$. The corresponding NDCG@5
increases from $0.299$ under CPS to $0.356$ under APSM, while exact
allocation match increases from $41.0\%$ to $59.7\%$.

This difference follows from the exact-execution allocation margin. The
mean margin is $0.3350$ for CPS and $0.6462$ for APSM, corresponding to
a $1.93\times$ increase. An approximate executor can therefore make a
numerically similar reporting error under both mechanisms while being
less likely to cross an allocation boundary under APSM. Finally,
Figure~\ref{fig:app_executor_approximation}(f) shows that the observed
APSM report error remains below the empirical mean square-root bound in
every condition, consistent with the approximate-execution result.

\paragraph{Implementation and reproducibility.}
The experiments were conducted on an Ubuntu 22.04.5 server with dual
Intel Xeon Platinum 8368Q CPUs, 503~GB of system memory, and eight
NVIDIA RTX 4090 GPUs. Local Qwen inference was served through an
OpenAI-compatible vLLM interface with a maximum context length of 4096.

For every paired experiment, we cache the query or incident identifier,
model--agent assignment, generated candidate content, supporting
evidence, reference belief, realized outcome, admissible report domain,
allocation tie-breaking order, initial report profile, update order, and
audit random-number stream. The accompanying experiment configuration
additionally specifies the public position-value vector, the exact
selection-benefit schedule and scale, all mechanism parameters, the
best-response stopping criterion, full prompts, parse-error handling,
noise distributions, the Monte Carlo sampling protocol, and the
bootstrap scripts used to compute confidence intervals. Together, these
artifacts fully determine the reported mechanism comparisons and paired
uncertainty estimates without requiring parameter values to be inferred
from the figures.

\subsection{CPS Score-Weight Sweep}
\label{app:cps_eta_sweep}

We further test whether increasing the proper-score weight
$\eta$ can reproduce APSM's advantage under approximate
execution. We vary
\[
\eta\in\{0.5,1,1.43,2,4,8\},
\]
where $\eta=1.43$ matches APSM's mean per-instance utility
range. All conditions use the same frozen candidates, reference
beliefs, report set, allocation rule, and paired additive utility
noise with $\sigma=0.05$.

\begin{figure*}[t]
    \centering
    \includegraphics[width=0.98\textwidth]
    {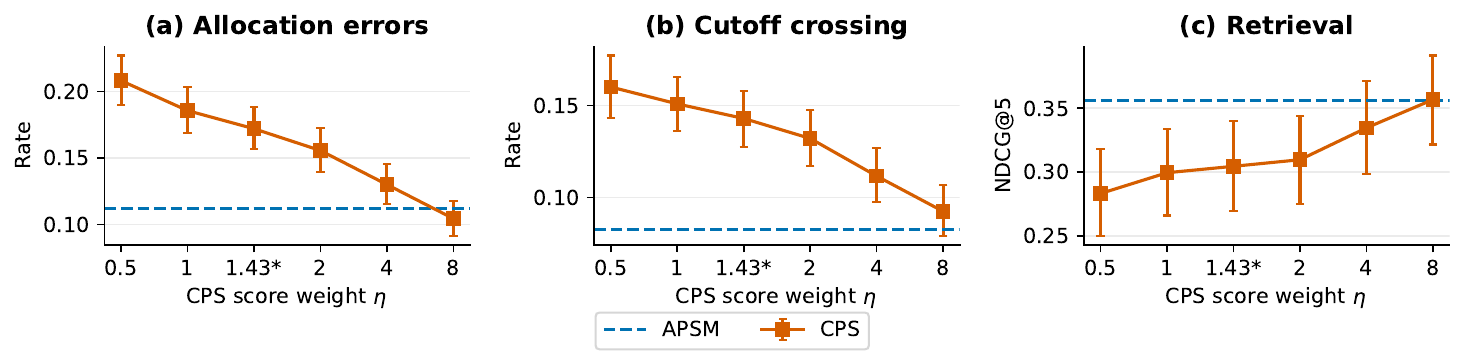}
    \caption{CPS score-weight sweep under additive utility
    noise $\sigma=0.05$. \textbf{(a)} Allocation-changing
    error rate. \textbf{(b)} Cutoff-crossing rate.
    \textbf{(c)} NDCG@5. Dashed lines show APSM with
    $\eta=1$; the starred point matches APSM's mean utility
    range. Error bars denote 95\% seed--query hierarchical
    bootstrap intervals.}
    \label{fig:app_cps_eta_sweep}
\end{figure*}

Matching the mean utility range does not eliminate the gap.
At $\eta=1.43$, CPS retains an allocation-error rate of
$17.2\%$ and a cutoff-crossing rate of $14.3\%$, compared
with $11.2\%$ and $8.3\%$ under APSM. Its NDCG@5 is
$0.304$, compared with $0.356$ under APSM.

CPS becomes statistically comparable on aggregate metrics only
near $\eta=8$. This comparison does not imply mechanism
equivalence: increasing $\eta$ eightfold also increases the
proper-score transfer scale eightfold and its audit-induced
variance component by a factor of $64$ under inverse-propensity
settlement.

The distinction is structural near a Top-$K$ boundary. For a
cutoff gap $\delta$, CPS penalizes the minimum crossing
deviation by $\eta\delta^2$, whereas APSM additionally incurs
a first-order belief-space allocation loss proportional to
$d_K\delta$. Consequently, reproducing the same local
deterrence using CPS alone requires $\eta$ to grow as the cutoff
gap decreases.

\subsection{Controlled Best-Response Implementation Details}
\label{app:best_response_details}

\paragraph{Allocation and utility parameters.}
All AgentWebBench experiments use ranked Top-$K$ allocation
with $K=2$ and position discounts
\begin{equation}
(d_1,d_2)=(1,0.63093),
\end{equation}
where $0.63093=1/\log_2 3$. The platform-controlled private
selection benefit is
\begin{equation}
G_i(a)=v\,\mathbb I\{i\in a\},\qquad v=1.
\end{equation}
Thus, selection provides unit private utility regardless of
rank. Position-dependent value enters separately through the
audited realized-value and Groves terms. We use the normalized
Brier score
\begin{equation}
S(r,y)=1-(r-y)^2
\end{equation}
with proper-score weight $\eta=1$.

\paragraph{Admissible reports.}
For the exact calculator experiments, the agent-specific
candidate set is
\begin{equation}
\begin{split}
\mathcal R_i
={}&
\{0.01,0.25,0.50,0.75,0.99,q_i\}\\
&\cup
\left\{
[r_j-0.01]_{0.01}^{0.99},
[r_j+0.01]_{0.01}^{0.99}
:j\neq i
\right\},
\end{split}
\label{eq:calculator_report_set}
\end{equation}
where $[x]_{0.01}^{0.99}$ clips $x$ to
$[0.01,0.99]$. Duplicate values are removed and the remaining
reports are sorted. Consequently, the frozen reference belief
$q_i$ is always admissible, while the additional points capture
reports immediately below and above competitors' allocation
thresholds.

The finite-grid ablation instead uses
\begin{equation}
\mathcal R^{\mathrm{grid}}
=
\{0.05,0.10,\ldots,0.95\}.
\end{equation}
In this condition, $q_i$ need not lie on the grid; the resulting
difference therefore measures report-discretization error.

\paragraph{Tie-breaking.}
Allocation ranks agents by decreasing report, with smaller
agent index breaking exact report ties. Utility ties are resolved
by selecting the smallest report among maximizing candidates.
The approximate-execution sweep selects the highest report among
numerically tied utility maximizers.
For Pivot-only, we retain and evaluate the complete maximizing
set rather than forcing a unique report.

\paragraph{Best-response protocol per experiment.}
Table~\ref{tab:best_response_protocols} summarizes the protocol
used in each experiment. Across all controlled-replay experiments
(RQ2 and RQ3), each agent optimizes independently against
$r_{-i}=q_{-i}$ (unilateral, simultaneous best response). After
all agents submit, the joint profile is evaluated under the
common Top-$K$ allocation rule. All mechanisms within an
experiment share the same protocol.

\begin{table}[h]
\centering
\caption{Best-response protocol per experiment.}
\label{tab:best_response_protocols}
\footnotesize
\begin{tabular}{@{}p{1.5cm}p{1.5cm}p{2.5cm}p{1.3cm}@{}}
\toprule
Experiment & $r_{-i}$ & $\mathcal R_i$ & Optimization \\
\midrule
RQ2 AgentWebBench & $q_{-i}$ (unilateral) & cached, Eq.~\eqref{eq:calculator_report_set} & exact enum. \\
RQ2 MultiAgentBench & $q_{-i}$ (unilateral) & cached domain & exact enum. \\
RQ3 exact & $q_{-i}$ (unilateral) & $q_i$-augmented, Eq.~\eqref{eq:calculator_report_set} & exact enum. \\
RQ3 approximate & $q_{-i}$ (unilateral) & stated domain & noisy / MC \\
RQ3 near-threshold & synthetic fixed (see text) & stated domain & exact enum. \\
\bottomrule
\end{tabular}
\end{table}

For the near-threshold decomposition, we generate 50 controlled
instances according to
\begin{equation}
q_i\sim\mathrm{Unif}(0.075,0.925),
\end{equation}
and
\begin{equation}
\tau_i=
\operatorname{clip}\!\left(
q_i+\mathrm{Unif}(-0.035,0.035),
0.05,0.95
\right).
\end{equation}
The competing profile is fixed as
\begin{equation}
r_{-i}=(0.9,\tau_i,0.1).
\end{equation}
This construction places the evaluated agent near a hard
allocation boundary without introducing best-response dynamics.

\paragraph{Approximate utility evaluation.}
For additive-noise execution, each candidate utility is
perturbed as
\begin{equation}
\widetilde U_i(r)
=
\overline U_i(r)+\xi_{i,r},
\qquad
\xi_{i,r}\sim\mathcal N(0,\sigma^2),
\end{equation}
with $\sigma\in\{0.01,0.05\}$. CPS and APSM reuse the same
noise realization for each paired agent--report instance.

For Monte Carlo execution, we draw
\begin{equation}
Y^{(m)}\sim\operatorname{Bernoulli}(q_i),
\qquad
M\in\{10,50,200\},
\end{equation}
and replace the expected proper score by its sample average.
Only the outcome $Y$ is sampled. The audit indicator is
integrated out analytically, while selection compensation,
expected verified value, pivot charges, and inverse-propensity
scaling remain part of the analytically evaluated expected
utility.

\end{document}